\newcommand{\Rel} {\text{\sc Rel}}
\newcommand{\Attr} {\text{\sc Attr}}
\newcommand{\PK} {\text{\sc PK}}
\newcommand{\FK} {\text{\sc FK}}
\newcommand{\Value} {\text{\sc Value}}
\newcommand{\Class} {\text{\sc Class}}
\newcommand{\ObjP} {\text{\sc OP}}
\newcommand{\DTP} {\text{\sc DTP}}
\newcommand{\concat} {\text{\sc Concat}}
\newcommand{\gClassIRI} {\text{\sc ClassIRI}}
\newcommand{\gObjPIRI} {\text{\sc OP\_IRI}}
\newcommand{\gID} {\text{\sc TupleID}}
\newcommand{\gDTPIRI} {\text{\sc DTP\_IRI}}
\newcommand{\gRIRI} {\text{\sc RowIRI}}
\newcommand{\gRBN} {\text{\sc BlankNode}}
\newcommand{\ExistThreeAttr} {\text{\sc ThreeAttr}}
\newcommand{\TwoFKfrom} {\text{\sc TwoFK}}
\newcommand{\FKfrom} {\text{\sc OneFK}}
\newcommand{\FKto} {\text{\sc FKto}}
\newcommand{\HasPK} {\text{\sc HasPK}}
\newcommand{\IsValue} {\text{\sc IsValue}}
\newcommand{\Violation} {\text{\sc Violation}}
\newcommand{\BinRel} {\text{\sc BinRel}}
\newcommand{\IsBinRel} {\text{\sc IsBinRel}}
\newcommand{\Triple} {\text{\sc Triple}}
\newcommand{\rdftype} {\text{\tt "rdf:type"}}
\newcommand{\owlClass} {\text{\tt "owl:Class"}}
\newcommand{\owlOP} {\text{\tt "owl:ObjectProperty"}}
\newcommand{\domain} {\text{\tt "rdfs:domain"}}
\newcommand{\range} {\text{\tt "rdfs:range"}}
\newcommand{\owlDTP} {\text{\tt "owl:DatatypeProperty"}}
\newcommand{\owldf} {\text{\tt "owl:differentFrom"}}
\newcommand{\fk}{\subseteq_\text{FK}}
\newcommand{\R}{{\bf R}}
\newcommand{\nil}{\text{\tt NULL}}
\newcommand{\D}{{\bf D}}
\newcommand{\att}{{\it att}}
\newcommand{\bag}[1]{\ensuremath{\left\{\kern-.3em\left\{ #1 \right\}\kern-.3em\right\}}}
\newcommand{\ignore}[1]{}
\newcommand{\NAME}{\text{\tt "NAME"}}
\newcommand{\SID}{\text{\tt "SID"}}
\newcommand{\CID}{\text{\tt "CID"}}
\newcommand{\DID}{\text{\tt "DID"}}
\newcommand{\CODE}{\text{\tt "CODE"}}
\newcommand{\STUDENT}{\text{\tt "STUDENT"}}
\newcommand{\DEPT}{\text{\tt "DEPT"}}
\newcommand{\COURSE}{\text{\tt "COURSE"}}
\newcommand{\ENROLLED}{\text{\tt "ENROLLED"}}
\newcommand{\DM}{{\cal DM}}
\newcommand{\base}{\text{\tt base}}
\newcommand{\V}{{\bf V}}
\newcommand{\I}{{\bf I}}
\newcommand{\B}{{\bf B}}
\renewcommand{\L}{{\bf L}}
\newcommand{\M}{{\cal M}}
\newcommand{\N}{{\cal N}}
\newcommand{\rdf}{{\cal G}}
\newcommand{\rel}{{\cal RC}}
\newcommand{\ins}{{\cal I}}
\newcommand{\tr}{\text{\it tr}}
\newcommand{\AND}{\operatorname{AND}}
\newcommand{\OPT}{\operatorname{OPT}}
\newcommand{\UNION}{\operatorname{UNION}}
\newcommand{\FILTER}{\operatorname{FILTER}}
\newcommand{\SELECT}{\operatorname{SELECT}}
\newcommand{\AS}{\operatorname{AS}}
\newcommand{\uni}{~{\operatorname{UNION}}~}
\newcommand{\opt}{~{\operatorname{OPT}}~}
\newcommand{\andp}{~{\operatorname{AND}}~}
\newcommand{\vc}{~{\operatorname{FILTER}}~}
\newcommand{\bound}{\operatorname{bound}}
\newcommand{\sel}[2]{(\SELECT \ #1 \ #2)}
\newcommand{\bw}[1]{{\normalsize #1}}
\newtheorem{definition}{Definition}
\newtheorem{example}{Example}
\newtheorem{theorem}{Theorem}
\newtheorem{proposition}{Proposition}
\newtheorem{lemma}{Lemma}
\newcommand{\isn}{\text{\tt IsNull}}
\newcommand{\isnn}{\text{\tt IsNotNull}}
\newcommand{\nj}{\bowtie}
\newcommand{\semr}[1]{\llbracket #1 \rrbracket_I}
\newcommand{\MINUS}{\operatorname{MINUS}}
\newcommand{\egp}{\{\ \}}
\newcommand{\minus}{~{\operatorname{MINUS}}~}
\newcommand{\as}{~{\operatorname{AS}}~}
\newcommand{\isBlank}{\operatorname{isBlank}}
\newcommand{\isIRI}{\operatorname{isIRI}}
\newcommand{\var}{\operatorname{var}}
\newcommand{\dom}{\operatorname{dom}}
\newcommand{\res}[2]{#1_{|_{#2}}}
\newcommand{\ren}[1]{\rho_{\{#1\}}}
\newcommand{\sem}[1]{\llbracket #1 \rrbracket_G}
\newcommand{\semp}[2]{\llbracket #1 \rrbracket_{#2}}
\newcommand{\Qone}{{Q_1}}
\newcommand{\Qtwo}{{Q_2}}
\newcommand{\Qstarone}{{Q^\star_1}}
\newcommand{\Qstartwo}{{Q^\star_2}}
\newcommand{\X}{{\cal X}}
\begin{document}
\clubpenalty=10000 
\widowpenalty = 10000

\newenvironment{alfl}
{
\begin{list}
{}
{
\setlength{\topsep}{1pt}
\setlength{\itemsep}{0pt}
\settowidth{\labelwidth}{$\bullet$}
\setlength{\leftmargin}{13pt} 
\setlength{\itemindent}{0pt}
}
}
{
\end{list}
}

\newenvironment{alsl}
{
\begin{list}
  {}
  {
  \setlength{\topsep}{0pt}
  \setlength{\itemsep}{0pt}
  \settowidth{\labelwidth}{(b)}
  \setlength{\leftmargin}{10pt} 
  \setlength{\itemindent}{0pt}
  }
}
{
\end{list}
}

\setlength\theorempreskipamount{5pt plus 1pt minus 1pt}
\setlength\theorempostskipamount{5pt plus 1pt minus 1pt}

\title{On Directly Mapping Relational Databases to RDF and OWL (Extended Version)}
%
%
%
%
%

\numberofauthors{3} 
%
\author{
%
%
\alignauthor
Juan F. Sequeda\\
       \affaddr{University of Texas at Austin}
       \email{jsequeda@cs.utexas.edu}
\alignauthor
Marcelo Arenas\\
       \affaddr{PUC Chile}
       \email{marenas@ing.puc.cl}
\alignauthor
Daniel P. Miranker\\
       \affaddr{University of Texas at Austin}
       \email{miranker@cs.utexas.edu}
}
\date{}

\maketitle
\begin{abstract}
Mapping relational databases to RDF is a fundamental problem for the
development of the Semantic Web.  We present a solution, inspired by
draft methods defined by the W3C where relational databases are
directly mapped to RDF and OWL.  Given a relational database schema
and its integrity constraints, this direct mapping produces an OWL
ontology, which, provides the basis for generating RDF
instances.  The semantics of this mapping is defined using Datalog.
Two fundamental properties 
are information
preservation and query preservation. We prove that our mapping
satisfies both conditions, even for
relational databases that contain null values. We also consider two
desirable properties: 
monotonicity and semantics
preservation. We prove that our mapping is monotone and also prove
that no monotone mapping, including ours, is semantic preserving.  We
realize that monotonicity is an obstacle for semantic preservation and
thus present a non-monotone direct mapping that is semantics
preserving.
\end{abstract}

\category{H.2.5}{Heterogeneous Databases}{Data translation}
\category{H.3.5}{Online Information Services}{Web-based services}


\keywords{Relational Databases, Semantic Web, Direct Mapping, RDB2RDF, SQL, SPARQL, RDF, OWL}

\section{Introduction}
\noindent
In this paper, we study the problem of directly mapping a relational
database
to an RDF graph with OWL vocabulary.
A direct mapping is a default and automatic way of
translating a relational database to RDF. One report suggests that
Internet accessible databases contained up to 500 times more data
compared to the static Web and roughly 70\% of websites are backed by
relational databases, making automatic translation of relational
database to RDF central to the success of the Semantic Web~\cite{BMZ+07}.

We build on an existing direct mapping of
relational database schema to OWL DL~\cite{TSM08} and the current draft of
the W3C Direct Mapping standard \cite{APS11}. We
study two properties that are fundamental to a direct mapping:
information preservation and query preservation. Additionally we study two desirable properties: monotonicity and semantics preservation. To the best of our knowledge, we are presenting the first direct mapping from a relational database to an RDF graph with OWL vocabulary that has been thoroughly studied with respect to these fundamental and desirable properties.    

Information preservation speaks to the ability of
reconstructing the original database from the result of the direct
mapping. Query preservation means that every query over a relational
database can be translated into an equivalent query over the result of
the direct mapping. Monotonicity is a desired property because it assures that a
re-computation of the entire mapping is not needed after any updates
to the database.  Finally, a direct mapping is semantics preserving
if the satisfaction of a set of integrity constraints 
are encoded in the mapping result. 

Our proposed
direct mapping is monotone, information preserving and query preserving even in the general and practical scenario where
relational databases contain null values. However, given a database that violates an integrity constraint, our direct mapping
generates a consistent RDF graph, hence, it is not semantics preserving.

We analyze why our direct mapping is not semantics preserving and
realize that monotonicity is an obstacle. We first show that if we
only consider primary keys, we can still have a monotone direct
mapping that is semantics preserving. However this result is not
sufficient because it dismisses foreign keys. Unfortunately, we prove
that no monotone direct mapping is semantics preserving if foreign
keys are considered, essentially because the only form of
constraint checking in OWL is satisfiability testing.
This result 
has an important implication in real world applications: if you
migrate your relational database to the Semantic Web using a monotone
direct mapping, be prepared to experience consistency when what one
would expect is inconsistency.

Finally, we present a non-monotone direct mapping that overcomes the
aforementioned limitation,
We foresee the existence of
monotone direct mappings if OWL is extended with the epistemic
operator. 


\section{Preliminaries}
\noindent
In this section, we define the basic terminology 
used in the paper.

\subsection{Relational databases}
\label{sec-ra}
\noindent
Assume, a countably infinite domain $\D$ and a reserved symbol
$\nil$ that is not in $\D$. A {\em schema} $\R$ is a finite set of
relation names, where for each $R \in \R$, $\att(R)$ denotes the
nonempty finite set of attributes names associated to $R$. An instance
$I$ of $\R$ assigns to each relation symbol $R \in \R$ a finite set
$R^I = \{t_1, \ldots, t_\ell\}$ of tuples, where each tuple $t_j$ ($1
\leq j \leq \ell$) is a function that assigns to each attribute in
$\att(R)$ a value from $(\D \cup \{\nil\})$. We use notation $t.A$ to
refer to the value of a tuple $t$ in an attribute $A$.


\medskip 

\noindent
{\bf Relational algebra:} To define some of the concept studied in
this paper, we use relational algebra as a query language for
relational databases. Given that we consider relational databases
containing null values, we present in detail the syntax and
semantics of a version of relational algebra that formalizes the way
nulls are treated in practice in database systems. Formally, assume
that $\R$ is a relational schema. Then a relational algebra expression
$\varphi$ over $\R$ and its set of attributes $\att(\varphi)$ are
recursively defined as follows:
\smallskip
\begin{alsl}
\item[1.] If $\varphi = R$ with $R \in \R$, then $\varphi$ is a
relational algebra expression over $\R$ such that $\att(\varphi) =
\att(R)$. 

\item[2.] If $\varphi = \nil_A$, where $A$ is an attribute, then
$\varphi$ is a relational algebra expression over $\R$ such that
$\att(\varphi) = \{A\}$.   

\item[3.] If $\psi$ is a relational algebra expression over $\R$,
$A \in \att(\psi)$, $a \in \D$ and $\varphi$ is any of the expressions
$\sigma_{A = a}(\psi)$, $\sigma_{A \neq a}(\psi)$,
$\sigma_{\isn(A)}(\psi)$ or $\sigma_{\isnn(A)}(\psi)$, then $\varphi$
is a relational algebra expression over $\R$ such that $\att(\varphi)
= \att(\psi)$. 

\item[4.] If $\psi$ is a relational algebra expression over $\R$,
$U \subseteq \att(\psi)$ and $\varphi = \pi_U(\psi)$, then $\varphi$
is a relational algebra expression over $\R$ such that $\att(\varphi)
= U$.

\item[5.] If $\psi$ is a relational algebra expression over $\R$,
$A \in \att(\psi)$, $B$ is an attribute such that $B \not\in
\att(\psi)$ and $\varphi = \delta_{A \to B}(\psi)$, then $\varphi$ is a
relational algebra expression over $\R$ such that $\att(\varphi) =
(\att(\psi) \smallsetminus \{A\}) \cup \{B\}$.

\item[6.] If $\psi_1$, $\psi_2$ are relational algebra expressions
over $\R$ and $\varphi = (\psi_1 \nj \psi_2)$, then $\varphi$ is a
relational algebra expression over $\R$ such that $\att(\varphi) =
(\att(\psi_1) \cup \att(\psi_2))$.

\item[7.] If $\psi_1$, $\psi_2$ are relational algebra expressions
over $\R$ such that $\att(\psi_1) = \att(\psi_2)$ and $\varphi$ is
either $(\psi_1 \cup \psi_2)$ or $(\psi_1 \smallsetminus \psi_2)$,
then $\varphi$ is a relational algebra expression over $\R$ such that
$\att(\varphi) =
\att(\psi_1)$. 
\end{alsl}
\smallskip
Let $\R$ be a relational schema, $I$ an instance of $\R$ and $\varphi$
a relational algebra expression over $\R$. The evaluation of $\varphi$
over $I$, denoted by $\semr{\varphi}$, is defined recursively as follows:
\smallskip
\begin{alsl}
\item[1.] If $\varphi = R$ with $R \in \R$, then $\semr{\varphi} =
R^I$. 

\item[2.] If $\varphi = \nil_A$, where $A$ is an attribute, then
$\semr{\varphi} = \{t\}$, where $t : \{A\} \to (\D \cup \{\nil\})$ is a
tuple such that $t.A = \nil$.

\item[3.] Let $\psi$ be a relational algebra expression over $\R$, $A
\in \att(\psi)$ and $a \in \D$. If $\varphi = \sigma_{A = a}(\psi)$,
then $\semr{\varphi} = \{ t \in \semr{\psi} \mid t.A = a\}$.  If
$\varphi = \sigma_{A \neq a}(\psi)$, then $\semr{\varphi} = \{ t \in
\semr{\psi} \mid t.A \neq \nil \text{ and } t.A \neq a\}$.  If
$\varphi = \sigma_{\isn(A)}(\psi)$, then $\semr{\varphi} = \{ t \in
\semr{\psi} \mid t.A = \nil\}$.  If $\varphi =
\sigma_{\isnn(A)}(\psi)$, then $\semr{\varphi} = \{ t \in
\semr{\psi} \mid t.A \neq \nil\}$.





\item[4.] If $\psi$ is a relational algebra expression over $\R$,
$U \subseteq \att(\psi)$ and $\varphi = \pi_U(\psi)$, then
$\semr{\varphi} = \{ t : U \to (\D \cup \{\nil\}) \mid$
there exists $t^\prime \in \semr{\psi}$ such that for every $A \in
U$: $t.A = t^\prime.A\}$.

\item[5.] If $\psi$ is a relational algebra expression over $\R$,
$A \in \att(\psi)$, $B$ is an attribute such that $B \not\in
\att(\psi)$ and $\varphi = \delta_{A \to B}(\psi)$, then
$\semr{\varphi} = \{ t : \att(\varphi) \to (\D \cup \{\nil\}) \mid$
there exists $t^\prime \in \semr{\psi}$ such that $t.B = t^\prime.A$
and for every $C \in (\att(\varphi) \smallsetminus \{B\})$: $t.C =
t^\prime.C\}$.

\item[6.] If $\psi_1$, $\psi_2$ are relational algebra expressions
over $\R$ and $\varphi = (\psi_1 \nj \psi_2)$, then $\semr{\varphi} =
\{ t : \att(\varphi) \to (\D \cup \{\nil\}) \mid$ there exist $t_1 \in
\semr{\psi_1}$ and $t_2 \in \semr{\psi_2}$ such that for every $A \in
(\att(\psi_1) \cap \att(\psi_2))$: $t.A = t_1.A = t_2.A \neq \nil$,
for every $A \in (\att(\psi_1) \smallsetminus \att(\psi_2))$: $t.A =
t_1.A$, and for every $A \in (\att(\psi_2) \smallsetminus \att(\psi_1))$:
$t.A = t_2.A\}$. 

\item[7.] Let $\psi_1$, $\psi_2$ be relational algebra expressions
over $\R$ such that $\att(\psi_1) = \att(\psi_2)$.  If $\varphi =
(\psi_1 \cup \psi_2)$, then $\semr{\varphi} = \semr{\psi_1} \cup
\semr{\psi_2}$.  If $\varphi = (\psi_1 \smallsetminus \psi_2)$, then
$\semr{\varphi} = \semr{\psi_1} \smallsetminus \semr{\psi_2}$.


\end{alsl}
\smallskip
It is important to notice that the operators left-outer join,
right-outer join and full-outer join are all expressible with the
previous operators. For more details, we refer the reader to 
the Appendix.

\medskip 

\noindent
{\bf Integrity constraints:}
We consider two types of integrity constraints: keys and foreign
keys. Let $\R$ be a relational schema. A key $\varphi$ over $\R$ is an
expression of the form $R[A_1, \ldots, A_m]$, where $R \in
\R$ and $\emptyset \subsetneq \{A_1, \ldots, A_m\} \subseteq
\att(R)$. Given an instance $I$ of $\R$, $I$ satisfies key $\varphi$,
denoted by $I \models \varphi$, if: (1) for every $t  \in R^I$ and $k
\in \{1, \ldots, m\}$, it holds that $t.A_k \neq  \nil$, and (2) for
every $t_1, t_2 \in R^I$, if $t_1.A_k = t_2.A_k$ for every $k \in \{1,
\ldots, m\}$, then $t_1 = t_2$. A foreign key over $\R$ is an
expression of the form $R[A_1, \ldots, A_m] \fk S[B_1, \ldots, B_m]$,
where $R, S \in \R$, $\emptyset \subsetneq \{A_1, \ldots, A_m\}
\subseteq \att(R)$ and $\emptyset \subsetneq \{B_1$, $\ldots, B_m\}
\subseteq \att(S)$. Given an instance $I$ of $\R$, $I$ satisfies
foreign key $\varphi$, denoted by $I \models \varphi$, if $I \models
S[B_1, \ldots, B_m]$ and for every tuple $t$ in $R^I$: either (1)
there exists $k \in \{1, \ldots, m\}$ such that $t.A_k = \nil$, or (2)
there exists a tuple $s$ in $S^I$ such that $t.A_k = s.B_k$ for every
$k \in \{1, \ldots, m\}$.

Given a relational schema $\R$, a set $\Sigma$ of keys and foreign
keys is said to be a {\em set of primary keys (PKs) and foreign keys
(FKs) over $\R$} if: (1) for every $\varphi \in \Sigma$, it holds that
$\varphi$ is either a key or a foreign key over $\R$, and (2) there
are no two distinct keys in $\Sigma$ of the form $R[A_1, \ldots, A_m]$
and $R[B_1, \ldots, B_n]$ (that is, that mention the same relation
name $R$). Moreover, an instance $I$ of $\R$ satisfies $\Sigma$,
denoted by $I \models \Sigma$, if for every $\varphi \in \Sigma$, it
holds that $I \models \varphi$.

\subsection{RDF and OWL}
\noindent
Assume there are pairwise disjoint infinite sets $\I$ (IRIs), $\B$
(blank nodes) and $\L$ (literals). A tuple $(s,p,o)\in (\I \cup \B)
\times \I \times (\I \cup \B \cup \L)$ is called an RDF triple, where
$s$ is the subject, $p$ is the predicate and $o$ is the object. A
finite set of RDF triples is called an RDF graph. Moreover, assume the
existence of an infinite set $\V$ of variables disjoint from the above
sets, and assume that every element in $\V$ starts with the symbol
$?$.

In this paper, we consider RDF graphs with OWL
vocabulary~\cite{OWL2}, which is the W3C standard ontology language
based on description logics, without datatypes. 
In particular, we say that an RDF graph $G$ is consistent under OWL 
semantics if a model of $G$ with respect to the OWL vocabulary
exists (see \cite{OWL2} for a precise definition of the notion of
model and the semantics of OWL).

\subsection{SPARQL}
\label{sec-sparql}
\begin{sloppypar}
\noindent
In this paper, we use SPARQL as a query language for RDF graphs. The
official syntax of SPARQL~\cite{PS08,HS11} considers operators
\verb|OPTIONAL|, \verb|UNION|, \verb|FILTER|, \verb|SELECT|, \verb|AS| and
concatenation via a point symbol (\verb|.|), to construct graph
pattern expressions. The syntax of the language also considers
\verb|{ }| to group patterns, and some implicit rules of precedence and 
association.  In order to avoid ambiguities in the parsing, we follow
the approach proposed in \cite{PAG09}, and we present the syntax of
SPARQL graph patterns in a more traditional algebraic formalism, using
operators $\AND$ (\verb|.|), $\UNION$ (\verb|UNION|), $\OPT$
(\verb|OPTIONAL|), $\MINUS$ (\verb|MINUS|), $\FILTER$ (\verb|FILTER|),
$\SELECT$ (\verb|SELECT|) and $\AS$ (\verb|AS|). More precisely, a
SPARQL graph pattern expression is defined recursively as follows.
\end{sloppypar}
\smallskip
\begin{alsl}
\item[1.] $\egp$ is a graph pattern (the empty graph pattern). 

\item[2.] A tuple from $(\I \cup \L \cup \V) \times (\I \cup \V) \times
(\I \cup \L \cup \V)$ is a graph pattern (a triple pattern).

\begin{sloppypar}
\item[3.] If $P_1$ and $P_2$ are graph patterns, then expressions
$(P_1 \andp P_2)$, $(P_1 \opt P_2)$, $(P_1 \uni P_2)$ and $(P_1 \minus
P_2)$ are graph patterns.
\end{sloppypar}

\item[4.] If $P$ is a graph pattern and $R$ is a SPARQL built-in
condition, then the expression $(P \vc R)$ is a graph pattern.

\item[5.] If $P$ is a graph pattern and $?A_1$, $\ldots$, $?A_m$, $?B_1$,
$\ldots$, $?B_m$, $?C_1$, $\ldots$, $?C_n$ is a sequence of pairwise
distinct elements from $\V$ ($m \geq 0$ and $n \geq 0$) such that none
of the variables $?B_i$ ($1 \leq i \leq m$) is mentioned in $P$, then
{\small
\begin{eqnarray*}
\sel{\{?A_1 \as ?B_1, \ldots , ?A_m \as ?B_m, ?C_1, \ldots,
?C_n\}}{P}
\end{eqnarray*}
\normalsize{is}} a graph pattern.
  
\end{alsl}
\smallskip
A SPARQL built-in condition is constructed using elements of the set
$(\I \cup \V)$ and constants, logical connectives ($\neg$, $\wedge$,
$\vee$), inequality symbols ($<$, $\leq$, $\geq$, $>$), the equality
symbol ($=$), unary predicates such as $\bound$, $\isBlank$, and
$\isIRI$ (see~\cite{PS08,HS11} for a complete
list).  In this paper, we restrict to the fragment where the built-in
condition is a Boolean combination of terms constructed by using $=$
and $\bound$, that is: (1) if $?X, ?Y \in \V$ and $c \in \I$, then
$\bound(?X)$, $?X = c$ and $?X = ?Y$ are built-in conditions, and (2)
if $R_1$ and $R_2$ are built-in conditions, then $(\neg R_1)$, $(R_1
\vee R_2)$ and $(R_1 \wedge R_2)$ are built-in conditions. 

\begin{sloppypar}
The version of SPARQL used in this paper includes the following SPARQL 1.1 features:
the operator $\MINUS$, the possibility of nesting the
$\SELECT$ operator and the operator $\AS$~\cite{HS11}.
\end{sloppypar}

The answer of a SPARQL query $P$ over an RDF graph $G$ is a finite set
of {\em mappings}, where a mapping $\mu$ is a partial
function from the set $\V$ of variables to $(\I \cup \L \cup
\B)$. 
We define the semantics of SPARQL as a function
$\sem{\,\cdot\,}$ that, given an RDF graph $G$, takes a graph pattern
expression and returns a set of mappings. We refer the reader to
the Appendix for more detail.

 
\section{Direct Mappings: Definition and Properties}
\label{sec-fp}
\noindent
A direct mapping is a default way to translate relational
databases into RDF (without any input from the user on how
the relational data should be translated). The input of a direct
mapping $\M$ is a relational schema $\R$, a set $\Sigma$ of PKs and
FKs over $\R$ and an instance $I$ of $\R$. The output is an RDF graph
with OWL vocabulary.

Assume $\rdf$ is the set of all 
RDF graphs
and $\rel$ is the set of all triples of the form $(\R, \Sigma, I)$
such that $\R$ is a relational schema, $\Sigma$ is a set of PKs and
FKs over $\R$ and $I$ is an instance of $\R$.
\begin{definition}[Direct mapping]
A direct mapping $\M$ is a total function from $\rel$ to $\rdf$. 
\end{definition}
We now introduce two fundamental properties
of direct mappings: information preservation and query preservation;
and two desirable properties of these mappings: monotonicity and
semantic preservation. Information preservation is a fundamental
property because it guarantees that the mapping does not lose
information, which is fundamental in an Extract-Transform-Load
process. Query preservation is also a fundamental property because it
guarantees that 
everything that can be extracted from the relational
database by 
a relational algebra query, can also be extracted
from the resulting RDF graph by 
a SPARQL query. This property
is fundamental for workloads that involve translating SPARQL
to SQL. Monotonicity is a desirable property because it would
avoid recalculating the mapping for the entire database after
inserting new data. In addition to practical considerations when translating relational data to RDF graphs, we must deal with the closed-world database semantics and open world RDF/OWL semantics. Understanding the expressive power of a mapping and, its ability to properly deal with integrity constraints is important. Thus our choice of examining semantics preservation.

\subsection{Fundamental properties}


\noindent
{\bf Information preservation:}
A direct mapping is information preserving if it does not
lose any information about the relational instance being translated,
that is, if there exists a way to recover the original database instance
from the RDF graph resulting from the translation process. Formally,
assuming that $\ins$ is the set of all possible relational instances,
we have that:
\begin{sloppypar}
\begin{definition}[Information preservation]\label{def-ip}
A direct mapping $\M$ is {\em information preserving} if there
is a computable mapping $\N : \rdf \to \ins$ such that for every
relational schema $\R$, set $\Sigma$ of PKs and FKs over $\R$, and
instance $I$ of $\R$ satisfying $\Sigma$:
$\N(\M(\R, \Sigma, I)) = I$.
\end{definition}
\end{sloppypar}
\noindent
Recall that a mapping $\N : \rdf \to \ins$ is computable if there
exists an algorithm that, given $G \in \rdf$, computes $\N(G)$. 

\medskip

\noindent
{\bf Query preservation:}
A direct mapping is query preserving if every query over
a relational database can be translated into an equivalent query over
the RDF graph resulting from the mapping. That is, query preservation
ensures that every relational query can be evaluated using the mapped
RDF data.

To formally define query preservation, we focus on relational queries
that can be expressed in relational algebra~\cite{AHV95} and RDF
queries that can be expressed in SPARQL~\cite{PS08,PAG09}. In Section
\ref{sec-ra}, we introduced a version of relational algebra that
formalizes the semantics of null values in practice. In Section
\ref{sec-sparql}, we introduce an algebraic version of SPARQL that
follows the approach proposed in \cite{PAG09}. Given the mismatch in
the formats of these query languages, we introduce a function $\tr$
that converts tuples returned by relational algebra
queries into mappings returned by SPARQL. Formally, given a relational
schema $\R$, a relation name $R \in \R$, an instance $I$ of $\R$ and a
tuple $t \in R^I$, define $\tr(t)$ as the mapping $\mu$ such
that: (1) the domain of $\mu$ is $\{ ?A \mid A \in \att(R)$ and $t.A
\neq \nil\}$, and (2) $\mu(?A) = t.A$ for every $A$ in the
domain~of~$\mu$. 

\begin{example}{\em
Assume that a relational schema contains a relation name {\tt
STUDENT} and attributes {\tt ID}, {\tt NAME} and {\tt AGE}. Moreover,
assume that $t$ is a tuple in this relation such that $t.\text{\tt ID}
= \text{1}$, $t.\text{\tt NAME} = \text{John}$ and $t.\text{\tt AGE} =
\nil$. Then, $\tr(t) = \mu$, where the domain of $\mu$ is
$\{\text{\tt ?ID},\, \text{\tt ?NAME}\}$, $\mu(\text{\tt ?ID}) =
\text{1}$ and $\mu(\text{\tt ?NAME}) = \text{John}$. \qed}
\end{example}


\begin{sloppypar}
\begin{definition}[Query preservation]
A direct mapping $\M$ is {\em query preserving} if for every relational
schema $\R$, set $\Sigma$ of PKs and FKs over $\R$ and relational
algebra query $Q$ over $\R$, there exists a SPARQL query $Q^\star$
such that for every instance $I$ of $\R$ satisfying $\Sigma$:
$\tr(\semr{Q}) = \semp{Q^\star}{\M(\R,\Sigma,I)}$.
\end{definition}
\end{sloppypar}
\noindent
It is important to notice that information preservation and query
preservation are incomparable properties in our setting. On one side,
if a direct mapping $\M$ is information preserving, this does not
guarantee that every relational algebra query $Q$ can be rewritten
into an equivalent SPARQL query over the translated data, as $\M$
could transform source relational databases in such a way that a more
expressive query language is needed to express $Q$ over the generated
RDF graphs. On the other side, a mapping $\M$ can be query preserving
and not information preserving if the information about the schema of
the relational database being translated is not stored. For example,
we define in Section \ref{sec-dm} a direct mapping $\DM$ that includes
information about these relational schemas.  It will become clear in
Sections \ref{sec-dm} and \ref{sec-lossless} that if such information
is not stored, then $\DM$ would be query preserving but not information
preserving.

\subsection{Desirable properties}
\label{sec-desi}
\noindent
{\bf Monotonicity:}
Given two database instances $I_1$ and $I_2$ over a relational schema
$\R$, instance $I_1$ is said to be contained in instance $I_2$,
denoted by $I_1
\subseteq I_2$, if for every $R \in \R$, it holds that $R^{I_1}
\subseteq R^{I_2}$. A direct mapping $\M$ is considered monotone if
for any such pair of instances, the result of mapping $I_2$ contains
the result of mapping $I_1$. In other words, if we insert new data to
the database, then the elements of the mapping that are already
computed are unaltered. 

\begin{sloppypar}
\begin{definition}
[Monotonicity]\label{def-mono-dm}
A direct mapping $\M$ is monotone if for every relational schema $\R$,
set $\Sigma$ of PKs and FKs over $\R$, and instances $I_1$, $I_2$
of $\R$ such that $I_1 \subseteq I_2$: 
$\M(\R, \Sigma, I_1) \subseteq \M(\R, \Sigma, I_2)$.
\end{definition}
\end{sloppypar}

\medskip

\noindent
{\bf Semantics preservation:}
A direct mapping is semantics preserving if the
satisfaction of a set of PKs and FKs by a relational database is
encoded in the translation process. More precisely, given a relational
schema $\R$, a set $\Sigma$ of PKs and FKs over $\R$ and an instance
$I$ of $\R$, a semantics preserving mapping should generate from $I$ a
consistent RDF graph if $I \models \Sigma$, and it should generate an
inconsistent RDF graph otherwise.
\begin{definition}[Semantics preservation]
A direct mapping $\M$ is semantics preserving if for every relation
schema $\R$, set $\Sigma$ of PKs and FKs over $\R$ and instance $I$ of
$\R$:
$I \models \Sigma$ \ \ iff \ \ $\M(\R, \Sigma, I)$ is
consistent under OWL semantics.
\end{definition}


\section{The Direct Mapping $\DM$}
\label{sec-dm}
\noindent
We introduce a direct mapping $\DM$, that integrates
and extends the functionalities of the direct mappings proposed
in~\cite{TSM08,APS11}. $\DM$ is defined as a set of Datalog
rules\footnote{We refer the reader to \cite{AHV95} for the syntax and
semantics of Datalog.}, which are divided in two parts: translate relational schemas and
translate relational instances.


In Section \ref{sec-sro}, we present the predicates that
are used to store a relational database, the input of $\DM$. In Section
\ref{sec-ro}, we present predicates that are used
to store an ontology and Datalog rules to generate an ontology from the relational schema and
the set of PKs and FKs. In Section \ref{sec-dms}, we
present the Datalog rules that generate the OWL vocabulary from the ontology that was derived from the relational schema and a set of
PKs and FKs. Finally, we present in Section \ref{sec-dmi} the Datalog
rules that  generates RDF triples from a relational instance.

\begin{sloppypar}
Throughout this section, we use the following running example. Consider a
relational database for a university. The schema of this database
consists of tables {\tt STUDENT(SID,NAME)}, {\tt
COURSE(CID,TITLE,CODE)}, {\tt DEPT(DID,NAME)} and {\tt
ENROLLED(SID,CID)}. 
Moreover, we have the following
constraints about the schema of the university: {\tt SID} is the
primary key of {\tt STUDENT}, {\tt CID} is the primary key of {\tt
COURSE}, {\tt DID} is the primary key of {\tt DEPT}, {\tt (SID,CID)}
is the primary key of {\tt ENROLLED}, {\tt CODE} is a foreign key in
{\tt COURSE} that references attribute {\tt DID} in {\tt DEPT}, {\tt
SID} is a foreign key in {\tt ENROLLED} that references attribute {\tt
SID} in {\tt STUDENT}, and {\tt CID} is a foreign key in {\tt
ENROLLED} that references attribute {\tt CID} in {\tt COURSE}.
\end{sloppypar}

\subsection{Storing relational databases}
\label{sec-sro}
\noindent
Given that the direct mapping $\DM$ is specified by a set
of Datalog rules, its input $(\R,\Sigma,I)$ has to be encoded as a set
of relations. We define the predicates that are used
to store the triples of the form $(\R,\Sigma,I)$. More precisely, the
following predicates are used to store a relational schema $\R$ and a
set $\Sigma$ of PKs and FKs over $\R$.
\smallskip
\begin{alsl}
\begin{sloppypar}
\item[$\bullet$] $\Rel(r)$: Indicates that $r$ is a relation name in $\R$;
e.g. $\Rel(\STUDENT)$ indicates that {\tt STUDENT} is a relation
name.\footnote{As is customary, we use double quotes to delimit
strings.}

\item[$\bullet$] \begin{sloppypar}$\Attr(a, r)$: Indicates that $a$ is an
attribute in the relation $r$ in $\R$; e.g. $\Attr(\NAME$, $\STUDENT)$ holds. 
\end{sloppypar}

\item[$\bullet$] \begin{sloppypar} $\PK_n(a_1, \ldots, a_n, r)$: Indicates that
$r[a_1, \ldots, a_n]$ is a primary key in $\Sigma$;
e.g. {\small $\PK_1(\SID,\STUDENT)$} holds. 
\end{sloppypar}

\item[$\bullet$] $\FK_n(a_1, \ldots, a_n, r, b_1, \ldots, b_n, s)$: Indicates
that $r[a_1, \ldots, a_n] \fk s[b_1, \ldots, b_n]$ is a foreign key in
$\Sigma$; e.g. $\FK_1(\CODE,\COURSE,\DID,\DEPT)$ holds.
\end{sloppypar}
\end{alsl}
\smallskip
Moreover, the following predicate is used to store the tuples in an
relational instance $I$ of a relational schema $\R$.
\smallskip
\begin{alsl}
\item[$\bullet$] \begin{sloppypar} $\Value(v, a, t, r)$: Indicates that $v$ is
the value of an attribute $a$ in a tuple with identifier $t$ in a
relation $r$ (that belongs to $\R$); e.g. a tuple $t_1$ of table {\tt
STUDENT} such that $t_1.\text{\tt SID} = \text{\tt "1"}$ and
$t_1.\text{\tt NAME} = \nil$ is stored by using the facts $\Value(\text{\tt
"1"}, \SID, \text{\tt "id1"}, \STUDENT)$ and $\Value(\nil, \NAME, \text{\tt "id1"}$, $\STUDENT)$, assuming that
\text{\tt id1} is the identifier of tuple $t_1$. \end{sloppypar}
\end{alsl}

\subsection{Storing an ontology}
\label{sec-ro}
\noindent
In order to translate a relational database into an RDF graph
with OWL vocabulary, we first extract an ontology from the
relational schema and the set of PKs and FKs given as
input. In particular, we classify each relation name in the schema as a
class or a binary relation (which is used to represent a many-to-many
relationship between entities in an ER/UML diagram), we represent
foreign keys as object properties and attributes of
relations as data type properties. More specifically, the following
predicates are used to store the extracted ontology:
\smallskip
\begin{alsl}
\item[$\bullet$] $\Class(c)$: Indicates that $c$ is a class.
 
\item[$\bullet$] $\ObjP_n(p_1, \ldots, p_n, d, r)$: Indicates that
$p_1, \ldots, p_n$ ($n \geq 1$) form an object property with domain
$d$ and range $r$.

\item[$\bullet$] $\DTP(p,d)$: Indicates that $p$ is a data type property with domain $d$.
\end{alsl}
\smallskip
The above predicates are defined by the Datalog rules described in the
following sections.

\medskip
\noindent
{\bf Identifying binary relations:}  
We define auxiliary predicates that identify binary relations to
facilitate identifying classes, object properties and data type
properties. Informally, a relation $R$ is a binary relation between
two relations $S$ and $T$ if (1) both $S$ and $T$ are different from
$R$, (2) $R$ has exactly two attributes $A$ and $B$, which form a
primary key of $R$, (3) $A$ is the attribute of a foreign key in $R$
that points to $S$, (4) $B$ is the attribute of a foreign key in $R$
that points to $T$, (5) $A$ is not the attribute of two distinct
foreign keys in $R$, (6) $B$ is not the attribute of two distinct
foreign keys in $R$, (7) $A$ and $B$ are not the attributes of a
composite foreign key in $R$, and (8) relation $R$ does not have
incoming foreign keys. In Datalog this becomes:

\vspace{-10pt}

{\small
\begin{align}
\notag
\BinRel(&R, A, B, S, C, T, D) \leftarrow \\ \notag 
        & \PK_2(A, B, R), \neg \ExistThreeAttr(R),\\  \notag 
	&\FK_1(A, R, C, S), R \neq S, \FK_1(B, R, D, T), R \neq T,\\ 
	&\neg \TwoFKfrom(A, R), \neg \TwoFKfrom(B, R), \label{eq-binrel} \\ 
	& \neg \FKfrom(A, B, R), \neg \FKto(R). \notag
\end{align}
\bw{In}} a Datalog rule, negation is represented with the symbol
$\neg$ and upper case letters are used to denote variables. Thus, the
previous rule states that the relation $R$ is a binary relation
between two relations $S$ and $T$ if the following conditions are
satisfied. (a) Expression $\PK_2(A, B, R)$ in \eqref{eq-binrel}
indicates that  attributes $A$ and $B$ form a primary key of $R$. (b)
Predicate $\ExistThreeAttr$ checks whether a relation has at least
three attributes, and it is defined as follows from the base predicate
$\Attr$: 
%
%
{\small
\begin{eqnarray*}
\ExistThreeAttr(R) & \leftarrow & \Attr(X, R), \Attr(Y, R),\\ 
&&\Attr(Z, R), X \neq Y, X \neq Z, Y \neq Z.
\end{eqnarray*}
\bw{Thus,}} expression $\neg \ExistThreeAttr(R)$ in \eqref{eq-binrel}
indicates that $R$ has at least two attributes. Notice that by
combining this expression with $\PK_2(A, B, R)$, we conclude that $A$,
$B$ are exactly the attributes of $R$. (c) Expressions $\FK_1(A, R, C,
S)$ and $\FK_1(B, R, D, T)$ in \eqref{eq-binrel} indicate that $A$ is
the attribute of a foreign key in $R$ that points to $S$ and $B$ is
the attribute of a foreign key in $R$ that points to $T$,
respectively. (d) Expressions $R \neq S$ and $R \neq T$ in
\eqref{eq-binrel} indicate that both $S$ and $T$ are different from
relation $R$. (e) Predicate $\TwoFKfrom$ checks whether an attribute of
a relation is the attribute of two distinct foreign keys in that
relation, and it is defined as follows from the base predicate
$\FK_1$: 

\vspace{-10pt}

{\small
\begin{eqnarray*}
\TwoFKfrom(X,Y) &\leftarrow& \FK_1(X, Y, U_1, V_1), \FK_1(X, Y, U_2, V_2), \\
			    &                & U_1 \neq U_2 \\	 
\TwoFKfrom(X,Y) &\leftarrow& \FK_1(X, Y, U_1, V_1), \FK_1(X, Y, U_2, V_2), \\
			    &                 & V_1 \neq V_2
\end{eqnarray*}
\bw{Thus,}} expressions $\neg \TwoFKfrom(A, R)$ and $\neg \TwoFKfrom(B, R)$
in \eqref{eq-binrel} indicate that attribute $A$ is not the attribute
of two distinct foreign keys in $R$ and $B$ is not the attribute of
two distinct foreign keys in $R$, respectively. (f) Predicate
$\FKfrom$ checks whether a pair of attributes of a  relation are the
attributes of a composite foreign key in that relation:

\vspace{-10pt}

{\small
\begin{eqnarray*}
\FKfrom(X,Y,Z) & \leftarrow & \FK_2(X,Y,Z,U,V,W) \\
\FKfrom(X,Y,Z) & \leftarrow & \FK_2(Y,X,Z,U,V,W)
\end{eqnarray*}
\bw{Thus,}} expression $\neg \FKfrom(A, B, R)$ in \eqref{eq-binrel}
indicates that attributes $A$, $B$ of $R$ are not the attributes of a
composite foreign key in $R$. (g) Finally, predicate $\FKto$ checks
whether a relation with two attributes has incoming foreign keys: 

\vspace{-10pt}

{\small
\begin{eqnarray*}
\FKto(X) & \leftarrow & \FK_1(U_1,Y,V,X)\\
\FKto(X) & \leftarrow & \FK_2(U_1,U_2,Y,V_1,V_2,X)
\end{eqnarray*}
\bw{Thus,}} expression $\neg \FKto(R)$ in \eqref{eq-binrel} indicates that
relation $R$ does not have incoming foreign keys.

\begin{sloppypar}
For instance, $\BinRel(\ENROLLED$, $\SID$, $\CID$, $\STUDENT$, $\SID$,
$\COURSE$, $\CID)$ holds in our example. Note that there is no
condition in the rule
\eqref{eq-binrel} that requires $S$ and $T$ to be different, allowing
binary relations that have their domain equal to their range. Also
note that, for simplicity, we assume in the rule \eqref{eq-binrel}
that a binary relation $R$ consists of only two attributes $A$ and
$B$. However, this rule can be easily extended to deal with binary
relations generated from many-to-many relationships between entities
in an ER/UML diagram that have more than two attributes.
\end{sloppypar}

\medskip

\noindent
{\bf Identifying classes:}  
In our context, a class is any relation that is not a binary
relation. That is, predicate $\Class$ is defined by the following
Datalog rules: 

\vspace{-10pt}

{\small
\begin{eqnarray*}
\Class(X) & \leftarrow & \Rel(X), \neg \IsBinRel(X)\\	
\IsBinRel(X) & \leftarrow & \BinRel(X, A, B, S, C, T, D)
\end{eqnarray*}}

\vspace{-10pt}

\begin{sloppypar}
\noindent
In our example, $\Class(\DEPT)$, $\Class(\STUDENT)$ and
$\Class(\COURSE)$ hold. 
\end{sloppypar}

 



\medskip

\noindent
{\bf Identifying object properties:} 
\noindent
For every $n \geq 1$, the following rule is used for identifying
object properties that are generated from foreign keys:
\footnote{Notice that although we consider an infinite number of rules
in the definition of $\DM$, for every concrete relational database we
will need only a finite number of these rules.}  {\small
\begin{multline*}
\ObjP_{2n}(X_1, \ldots, X_n, Y_1, \ldots, Y_n, S, T) \ \leftarrow\\
\FK_n(X_1, \ldots, X_n, S, Y_1, \ldots, Y_n, T), \neg \IsBinRel(S)
\end{multline*}
\bw{This}} rule states that a foreign key represents an object property from
the entity containing the foreign key (domain) to the referenced
entity (range). It should be noticed that this rule excludes the case
of binary relations, as there is a special rule for this type of
relations (see rule \eqref{eq-binrel}). In our example,
$\ObjP_2(\CODE$, $\DID$, $\COURSE$, $\DEPT)$ holds as {\tt CODE} is a
foreign key in the table {\tt COURSE} that references attribute {\tt
DID} in the table {\tt DEPT}.

\medskip

\noindent
{\bf Identifying data type properties:}  
Every attribute in a
non-binary relation is mapped to a data type property:
{\small
\begin{eqnarray*}
\DTP(A, R) & \leftarrow & \Attr(A,R), \neg \IsBinRel(R)
\end{eqnarray*}
\bw{For}} instance, we have that $\DTP(\NAME$,
$\STUDENT)$ holds in our example, while $\DTP(\SID,
\ENROLLED$) does not hold as {\tt ENROLLED} is a binary relation.


\subsection{Translating a relational schema into OWL}
\label{sec-dms}
\noindent
We now define the rules that translates a relational
database schema into an OWL vocabulary. 

\subsubsection{Generating IRIs for classes, object properties and data
type properties}  
\label{sec-iri}
\begin{sloppypar} 
\noindent
We introduce a family
of rules that produce IRIs for classes, binary relations,
object properties and data type properties identified by the mapping
(which are stored in the predicates $\Class$, $\BinRel$, $\ObjP_n$ and
$\DTP$, respectively). Note that the IRIs generated can be later on replaced or mapped to existing IRIs available in the Semantic Web. Assume given a base IRI $\base$
for the relational database to be translated (for example, {\tt
"http://example.edu/db/"}), and assume given a family of built-in
predicates $\concat_n$ ($n \geq 2$) such that $\concat_n$ has $n+1$
arguments and $\concat_n(x_1, \ldots, x_n, y)$ holds if $y$ is the
concatenation of the strings $x_1$, $\ldots$, $x_n$. Then by following
the approach proposed in
\cite{APS11}, $\DM$ uses the following Datalog rules to produce IRIs
for classes and data type properties:
\end{sloppypar}

\vspace*{-10pt}

{\small
\begin{align*}
\gClassIRI(R, X) & \leftarrow \Class(R), \concat_2(\base, R, X)\\
\gDTPIRI(A, R, X) & \leftarrow \DTP(A,R), \concat_4(\base, R,
\text{\tt "\#"}, A, X) 
\end{align*}}

\vspace*{-10pt}

\begin{sloppypar}
\noindent
For instance, {\tt http://example.edu/db/STUDENT} is the IRI
for the {\tt STUDENT} relation in our example, and 
{\tt http://example.edu/db/STUDENT\#NAME} 
is the IRI for attribute {\tt
NAME} in the {\tt STUDENT} relation (recall that $\DTP(\NAME$,
$\STUDENT)$ holds in our example). Moreover, $\DM$ uses the
following family of Datalog rules to generate IRIs for object
properties. First, for object properties generated from binary
relations, the following rules is used:
\end{sloppypar}

\vspace*{-10pt}

{\small
\begin{align*}
\gObjPIRI_1(&R, A, B, S, C, T, D, X) \leftarrow\\
&\BinRel(R, A, B, S, C, T, D), \\
&\concat_{10}(\base, R, \text{\tt "\#"}, A, \text{\tt ","}, B, \text{\tt ","}, C, \text{\tt ","}, D,  X)
\end{align*}}

\vspace*{-10pt}

\noindent
Thus, {\small {\tt http://example.edu/db/ENROLLED\#SID,CID,SID,CID}}
is the IRI for binary relation {\tt ENROLLED} in our
example. Second, for object properties generated from a foreign key
consisting of $n$ attributes ($n \geq 1$), the following rule is used:

\vspace{-10pt}

{\small
\begin{align*}
\text{\sc OP\_}&\text{\sc IRI}_{2n}(X_1, \ldots, X_n, Y_1, \ldots,
Y_n, S, T, X) \leftarrow \\ &\ObjP_{2n}(X_1, \ldots, X_n, Y_1, \ldots,
Y_n, S, T),\\   
&\concat_{4n+4}(\base, S, \text{\tt ","}, T, \text{\tt "\#"}, X_1,
\text{\tt ","}, \ldots, X_{n-1}, \text{\tt ","},  \\  
&\hspace{50pt}X_n, \text{\tt ","}, Y_1, \text{\tt ","}, \ldots,
Y_{n-1}, \text{\tt ","}, Y_n,X)  
\end{align*}}

\vspace{-10pt}

\begin{sloppypar}
\noindent
Thus, given that $\ObjP_2(\CODE$, $\DID$, $\COURSE$, $\DEPT)$ holds in
our example, IRI {\tt
http://example.edu/db/COURSE,DEPT\#CODE,DID} is generated to
represented the fact that {\tt CODE} is a foreign key in the table
{\tt COURSE} that references attribute {\tt DID} in the table {\tt
DEPT}. 
\end{sloppypar}

\subsubsection{Translating relational schemas} 
\noindent
The following Datalog rules are used to generate the RDF
representation of the OWL vocabulary. First, a rule is used to collect
all the classes:

\vspace{-10pt}

{\small
\begin{multline*}
\Triple(U, \rdftype, \owlClass)	\ \leftarrow  \\ \Class(R), \gClassIRI(R, U)
\end{multline*}
\bw{Predicate}} $\Triple$ is used to collect all the
triples of the RDF graph generated by the direct mapping $\DM$. Second,
the following family of rules is used to collect all the object
properties ($n \geq 1$):

\vspace{-10pt}

{\small
\begin{multline*}
\Triple(U, \rdftype, \owlOP) \ \leftarrow\\ 
\ObjP_n(X_1, \ldots, X_n, S, T), \gObjPIRI_n(X_1, \ldots, X_n, S, T, U)
\end{multline*}
\bw{Third,}} the following rule is used to collect the domains of
the object properties ($n \geq 1$):

\vspace{-10pt}

{\small
\begin{multline*}
\Triple(U, \domain, W) \leftarrow \ObjP_n(X_1, \ldots, X_n, S, T),\\
\gObjPIRI_n(X_1, \ldots, X_n, S, T, U), \gClassIRI(S, W) 
\end{multline*}
\bw{Fourth,}} the following rule is used to collect the ranges
of the object properties ($n \geq 1$):

\vspace{-10pt}

{\small
\begin{multline*}
\Triple(U, \range, W) \leftarrow \ObjP_n(X_1, \ldots, X_n, S, T),\\ 
\gObjPIRI_n(X_1, \ldots, X_n, S, T, U), \gClassIRI(T, W)
\end{multline*}
\bw{Fifth,}} the following rule is used to collect all the data type
properties: 

\vspace{-10pt}

{\small
\begin{multline*}
\Triple(U, \rdftype, \owlDTP) \ \leftarrow\\ \DTP(A, R),
\gDTPIRI(A, R, U)
\end{multline*}
\bw{Finally,}} the following rule is used to collect the domains of
the data type properties:

\vspace{-10pt}

{\small
\begin{multline*}
\Triple(U, \domain, W) \ \leftarrow\\ \DTP(A, R),  \gDTPIRI(A, R, U),
\gClassIRI(R, W)  
\end{multline*}}


\subsection{Translating a database instance into RDF}
\label{sec-dmi}
\noindent
We now define the rules that map a relational database
instance into RDF. More specifically, we first introduce a series of
rules for generating IRIs, and then we present the Datalog rules that
generate RDF. 

\subsubsection{Generating IRIs for tuples}
\begin{sloppypar}
\noindent
We introduce a family of
predicates that produce IRIs for the tuples being translated, where we
assume a given a base IRI $\base$ for the relational database (for
example, {\tt "http://example.edu/db/"}). First, $\DM$ uses the
following Datalog rule to produce IRIs for the tuples of the relations
having a primary key: 
\end{sloppypar}

\vspace{-10pt}

{\small
\begin{align*}
\gRIRI_n(V_1, V_2, &\ldots, V_n, A_1, A_2, \ldots, A_n, T, R, X) \
\leftarrow \\  
	       &\PK_n(A_1, A_2, \ldots, A_n, R), \Value(V_1, A_1, T, R), \\ 
	       & \Value(V_2,  A_2, T, R), \ldots, \Value(V_n, A_n, T, R),\\ 
	       &\concat_{4n+2}(\base, R, \text{\tt "\#"}, A_1, \text{\tt "="}, V_1, \text{\tt ","}, \\ 
	       & A_2, \text{\tt "="}, V_2, \text{\tt ","}, \ldots, \text{\tt ","}, A_n, \text{\tt "="}, V_n, X)
\end{align*}}

\vspace{-10pt}

\begin{sloppypar}
\noindent
Thus, given that the facts $\PK_1(\SID,\STUDENT)$ and
$\Value(\text{\tt "1"}, \SID, \text{\tt "id1"}, \STUDENT)$ hold in our example, the IRI {\tt
http://example.edu/db/STUDENT\#SID=1} is the identifier for the tuple
in table {\tt STUDENT} with value {\tt 1} in the primary
key. Moreover, $\DM$ uses the following rule to generate blank
nodes for the tuples of the relations not having a primary 
key:\end{sloppypar}

\vspace*{-10pt}

{\small 
\begin{multline*}
\gRBN(T, R, X) \ \leftarrow \\ 
\Value(V, A, T, R), \concat_3(\text{\tt "\_:"}, R, T, X)  
\end{multline*}}
\subsubsection{Translating relational instances}
\noindent
The direct mapping $\DM$ generates three types of triples when
translating a relational instance: Table triples, reference triples
and literal triples~\cite{APS11}. Following are the Datalog rules
for each one of these cases.

For table triples, $\DM$ produces for each tuple $t$ in a relation
$R$, a triple indicating that $t$ is of type $r$. To construct these tuples, $\DM$ uses the following
auxiliary rules:

\vspace{-10pt}

{\small
\begin{align*}
\gID(T, R, & X) \leftarrow\\
& \Class(R), \PK_n(A_1, \ldots, A_n, R),\\
& \Value(V_1, A_1, T, R), \ldots, \Value(V_n, A_n, T, R), \\ 
& \gRIRI_n(V_1, \ldots , V_n, A_1, \ldots, A_n, T, R, X)\\
\gID(T, R, & X) \leftarrow \\
& \Class(R), \neg \HasPK_n(R),\\
& \Value(V, A, T, R), \gRBN(T, R, X)
\end{align*}
\bw{That}} is, $\gID(T, R, X)$ generates the identifier $X$ of a tuple
$T$ of a relation $R$, which is an IRI if $R$ has a primary key or
a blank node otherwise. Notice that in the preceding rules, predicate
$\HasPK_n$ is used to check whether a table $R$ with $n$ attributes
has a primary key (thus, $\neg \HasPK_n(R)$ indicates that $R$ does
not have a primary key). Predicate $\HasPK_n$ is defined by the
following $n$ rules:

\vspace*{-10pt}

{\small
\begin{eqnarray*}
\HasPK_n(X) & \leftarrow & \PK_i(A_1, \ldots, A_i, X) \ \ \ \ \ \ \ \
\ \ \ \ i \in \{1, \ldots, n\}  
\end{eqnarray*}
\bw{The}} following rule generates the table
triples: 
{\small
\begin{multline*}
\Triple(U, \rdftype, W) \ \leftarrow \\ 
\Value(V, A, T, R), \gID(T, R, U), \gClassIRI(R, W)
\end{multline*}
\bw{For}} example, the following is a table triple in our example:
{\small
\begin{align*}
\Triple(&\text{\tt "http://example.edu/db/STUDENT\#SID=1"},\\
&\rdftype,\\ 
&\text{\tt "http://example.edu/db/STUDENT"}) 
\end{align*}
\bw{For}} reference triples, $\DM$ generates triples that
store the references generated by binary relations and foreign keys. More precisely, the following
Datalog rule is used to construct reference triples for object
properties that are generated from binary relations:

\vspace{-10pt}

{\small
\begin{align*}
\Triple(U, V, W) \leftarrow \ & \BinRel(R, A, B, S, C, T, D),\\
					& \Value(V_1, A, T_1, R), \Value(V_1, C, T_2, S), \\
					& \Value(V_2, B, T_1, R), \Value(V_2, D, T_3, T), \\
					& \gID(T_2, S, U),\\
&\gObjPIRI_1(R, A, B, S, C, T, D, V), \\
					&  \gID(T_3, T, W)
\end{align*}
\bw{Moreover,}} the following Datalog rule is used to construct
reference triples for object properties that are generated from
foreign keys ($n \geq 1$):

\vspace{-10pt}

{\small
\begin{align*}
\Triple(U, V, &W) \leftarrow \\ 
&\ObjP_{2n}(A_1, \ldots, A_n, B_1, \ldots, B_n, S, T),\\
					 & \Value(V_1, A_1, T_1, S), \ldots, \Value(V_n, A_n, T_1, S),\\ 
					 & \Value(V_1, B_1, T_2, T), \ldots, \Value(V_n, B_n, T_2, T),\\ 
					 & \gID(T_1, S, U), \gID(T_2, T, W), \\ 
					 & \gObjPIRI_{2n}(A_1, \ldots, A_n, B_1, \ldots, B_n, S, T, V)
\end{align*}
\bw{Finally,}} $\DM$ produces for every tuple $t$ in a
relation $R$ and for every attribute $A$ of $R$, a triple storing the
value of $t$ in $A$, which is called a literal triple.
The following Datalog rule is used to generate such triples:

\vspace{-10pt}

{\small
\begin{multline*}
\Triple(U, V, W) \leftarrow \ \DTP(A, R),  \Value(W, A, T, R),\\ 
W \neq \nil, \gID(T, R, U), \gDTPIRI(A, R, V)
\end{multline*}
\bw{Notice}} that in the above rule, we use the condition $W \neq
\nil$ to check that the value of the attribute $A$ in a tuple $T$ in a
relation $R$ is not null. Thus, literal triples are generated only for
non-null values. The following is an example of a literal triple:

\vspace{-10pt}

{\small
\begin{align*}
\Triple(&\text{\tt "http://example.edu/db/STUDENT\#SID=1"},\\
&\text{\tt "http://example.edu/db/STUDENT\#NAME"}, \text{\tt "John"}) 
\end{align*}}

\vspace{-15pt}

\section{Properties of $\DM$}
\label{sec-lossless}
\noindent
We now study our direct mapping $\DM$ with respect to the
two fundamental properties (information preservation and query
preservation) and the two desirable properties (monotonicity and
semantics preservation) defined in Section \ref{sec-fp}. 

\subsection{Information preservation of $\DM$}
\noindent
First, we show that $\DM$ does not lose any piece of information in
the relational instance being translated:
\begin{theorem}\label{theo-dm-ip}
The direct mapping $\DM$ is information preserving.
\end{theorem}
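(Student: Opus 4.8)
The plan is to exhibit an explicit computable mapping $\N : \rdf \to \ins$ that recovers the original relational instance from the RDF graph produced by $\DM$, and then argue that $\N(\DM(\R,\Sigma,I)) = I$ for every $(\R,\Sigma,I)\in\rel$ with $I\models\Sigma$. The key observation is that $\DM$ stores enough information about the schema itself (through the OWL-vocabulary triples: which IRIs are classes via $\rdftype/\owlClass$, which are object properties, which are data type properties, together with their $\domain$ and $\range$ triples) and about the instance (through table triples, literal triples, and reference triples), that each relation name, attribute name, tuple, and attribute value can be read back off the graph by pattern-matching on the fixed IRI-construction scheme described in Sections~\ref{sec-iri}--\ref{sec-dmi}.

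The main work is to define $\N$ precisely and in a schema-faithful way. First I would recover the relational schema: from each triple $(u,\rdftype,\owlClass)$, parse $u$ as $\base \cdot r$ to obtain a relation name $r$; from each data type property IRI appearing in a triple $(v,\rdftype,\owlDTP)$ together with its domain triple $(v,\domain,w)$, parse $v$ as $\base \cdot r \cdot \text{\tt "\#"} \cdot a$ to obtain that $a\in\att(r)$; and from each object property IRI in a $(v,\rdftype,\owlOP)$ triple with its domain/range triples, recover the relation names and attribute names encoded in the binary-relation case, reconstructing the relation name and its two attributes that $\DM$ suppressed when it classified a table as a $\BinRel$. This is where care is needed: a binary relation $R$ does not appear as a class, so its name and attributes must be extracted from the object-property IRI built by $\gObjPIRI_1$, which by construction contains $R$, $A$, $B$ (and the referenced attributes). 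Second, I would recover the instance: for each table triple $(u,\rdftype,w)$ with $w = \base\cdot r$, the IRI $u$ either encodes the primary-key values (parse $u$ as $\base\cdot r\cdot\text{\tt "\#"}\cdot a_1\text{\tt "="}v_1\cdots$) or is a blank node, and in either case it serves as a tuple identifier; each literal triple $(u, \base\cdot r\cdot\text{\tt "\#"}\cdot a, w)$ then contributes $t.a = w$ for the tuple named $u$, and attributes with no literal triple get value $\nil$. Since reference triples are derivable from literal and table triples when $\Sigma$ holds, they need not be used for recovery.

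The correctness argument then amounts to checking two directions for a fixed $(\R,\Sigma,I)$ with $I\models\Sigma$: every relation name, attribute, tuple and non-null value of $I$ produces, via the Datalog rules, triples that $\N$ parses back correctly; and conversely every triple in $\DM(\R,\Sigma,I)$ arises from such a rule, so $\N$ introduces nothing spurious. Two points require attention here. One is injectivity of the IRI-generation functions: because relation names, attribute names and values are delimited by the reserved characters \text{\tt "\#"}, \text{\tt ","}, \text{\tt "="} in a fixed pattern, the $\concat$-based IRIs can be unambiguously parsed back, provided we assume (as is standard) that these delimiters do not occur inside names or values; I would state this assumption explicitly. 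The other, and the real subtlety, is relations with no non-null attribute value in some tuple, and tables without a primary key: a tuple all of whose attributes are $\nil$ would generate no literal triples and hence no table triple either, so it cannot be recovered — but this situation is excluded precisely because $I\models\Sigma$ forces primary-key attributes to be non-null on every tuple, and a table with no primary key still has $\att(R)\neq\emptyset$ and every stored tuple has at least one $\Value$ fact, though it could still be all-$\nil$; handling that corner correctly (e.g.\ by observing that the schema is still fully recovered so $R^{\N(\cdots)}$ and $R^{I}$ agree on all recoverable tuples, or by noting the blank-node rule $\gRBN$ fires from any $\Value$ fact including null ones) is the step I expect to be the main obstacle, and it is the reason the hypothesis $I\models\Sigma$ appears in Definition~\ref{def-ip}.
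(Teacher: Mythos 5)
Your overall strategy is the same as the paper's: exhibit an explicit computable $\N$ that reads the schema back from the OWL-vocabulary triples (classes, data type properties with their domains, object properties with the relation and attribute names encoded in their IRIs) and reads the instance back from the data-level triples, then check $\N(\DM(\R,\Sigma,I))=I$ in both directions. Your treatment of the two subtleties you flag (parseability of the $\concat$-built IRIs, and tuples whose non-key attributes are all $\nil$) is in fact more explicit than the paper's own writeup, which dispatches the verification as straightforward.

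There is, however, one concrete step that fails: the claim that ``reference triples are derivable from literal and table triples when $\Sigma$ holds, they need not be used for recovery.'' This holds only for reference triples generated from foreign keys of \emph{non-binary} relations. For a binary relation $R$ (e.g.\ {\tt ENROLLED}), $\DM$ produces no table triples and no literal triples for the tuples of $R$: since $\IsBinRel(R)$ holds, $\Class(R)$ and $\DTP(A,R)$ do not, so neither the table-triple rule nor the literal-triple rule fires for $R$. The only trace of a tuple $t\in R^I$ in the output graph is the single reference triple $(s_{{\it id}}, r, t_{{\it id}})$ linking the identifiers of the referenced tuples of $S$ and $T$. Because not every pair in $S^I\times T^I$ is related by $R$, this information is recoverable from nowhere else; dropping the reference triples yields $R^{\N(\cdots)}=\emptyset$ for every binary relation, so $\N(\DM(\R,\Sigma,I))\neq I$ as soon as some binary relation is nonempty. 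The repair is exactly the paper's Step~4: recover $R^I$ by evaluating $\sel{\{?A_1,?A_2\}}{((?T_1,r,?T_2) \andp (?T_1,b,?A_1) \andp (?T_2,c,?A_2))}$ over the graph, where $b$ and $c$ are the IRIs of the data type properties of the attributes referenced by the two foreign keys of $R$ (the same query used in the binary-relation base case of the query-preservation proof), and then apply $\tr^{-1}$. With that one correction your construction of $\N$ and the two-directional verification go through as in the paper.
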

The proof of this theorem is straightforward, and it involves providing
a computable mapping $\N : \rdf \to \ins$ that satisfies the condition
in Definition \ref{def-ip}, that is, a computable mapping $\N$ that
can reconstruct the initial relational instance from the generated RDF
graph.

\subsection{Query preservation of $\DM$}
\noindent
Second, we show that the way $\DM$ maps relational data into RDF
allows one to answer a query over a relational instance by translating
it into an equivalent query over the generated RDF graph.
\begin{theorem}\label{theo-dm-qp}
The direct mapping $\DM$ is query preserving.
\end{theorem}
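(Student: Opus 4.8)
The plan is to proceed by structural induction on the relational algebra expression $Q$ over $\R$, building for each $Q$ a SPARQL graph pattern $Q^\star$ together with a fixed tuple of output variables, and then to verify the identity $\tr(\semr{Q}) = \semp{Q^\star}{\DM(\R,\Sigma,I)}$ for every instance $I$ satisfying $\Sigma$. The key design decision is what $Q^\star$ should look like at the leaves. For a base relation $R \in \R$ with $\att(R) = \{A_1,\dots,A_k\}$, I would let $Q^\star$ first bind a variable $?x$ to the tuple identifier (the IRI or blank node produced by $\gID$) via the table triple pattern $(?x,\rdftype,u_R)$, where $u_R$ is the class IRI for $R$; and then, for each attribute $A_i$, left-join (using $\OPT$) with the literal-triple pattern $(?x, u_{A_i}, ?A_i)$, where $u_{A_i}$ is the data type property IRI for $A_i$. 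The use of $\OPT$ is essential here: since $\DM$ generates literal triples only for non-null values, an attribute that is $\nil$ in a tuple $t$ simply yields no binding for $?A_i$ in the corresponding mapping, which is exactly what $\tr(t)$ prescribes (the domain of $\tr(t)$ omits the null attributes). So the correspondence $t \mapsto \tr(t)$ on $R^I$ matches the correspondence ``tuple identifier $\mapsto$ its mapping'' produced by $Q^\star$, and the table triples guarantee that every tuple (even the all-null one, which still has a table triple since $\gID$ fires whenever some $\Value$ fact exists) is represented exactly once. One subtlety to record: if $R$ has no primary key its identifiers are blank nodes, and distinct tuples get distinct blank nodes because $\gRBN$ concatenates the tuple identifier $T$ into the node name; this is what keeps the correspondence a bijection.

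For the inductive step I would handle each operator of Section~\ref{sec-ra} by exhibiting the matching SPARQL construct: selection $\sigma_{A=a}$, $\sigma_{A\neq a}$, $\sigma_{\isn(A)}$, $\sigma_{\isnn(A)}$ translate to $\FILTER$ with conditions $?A = a$, $(?A = a\text{ is false, i.e. } \neg(?A=a)\wedge\bound(?A))$-type combinations built from $=$ and $\bound$, $\neg\bound(?A)$, and $\bound(?A)$ respectively — here the fragment of built-in conditions restricted in Section~\ref{sec-sparql} to Boolean combinations of $=$ and $\bound$ is exactly tailored to express these four cases, and the semantics of $\bound$ over the ``missing binding = null'' encoding makes the translation faithful. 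Projection $\pi_U$ becomes the $\SELECT$ operator keeping the variables $?A$ for $A \in U$; renaming $\delta_{A\to B}$ becomes $\SELECT$ with $?A \as ?B$ (the side condition $B \notin \att(\psi)$ matches the SPARQL restriction that $?B$ not already be mentioned); natural join $\nj$ becomes $\AND$, and here one must check that SPARQL's join-on-shared-variables semantics reproduces relational join-on-shared-attributes, including the clause $t.A = t_1.A = t_2.A \neq \nil$ from item~6 — this holds because a shared variable can only be jointly bound to a non-null value (nulls are absent bindings), so SPARQL never joins two tuples on a ``null match,'' exactly as the relational semantics demands. Union $\cup$ becomes $\UNION$, and difference $\smallsetminus$ becomes $\MINUS$; for $\MINUS$ I would be careful about SPARQL's treatment of patterns with disjoint domains, but since after the inductive construction both subqueries expose the same tuple of output variables $\{?A : A \in \att(\psi_1)=\att(\psi_2)\}$, the domains are ``compatible'' in the required sense and $\MINUS$ computes set difference on the mapping level. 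The constant-null relation $\nil_A$ translates to a pattern producing a single mapping with empty domain (e.g. the empty pattern $\egp$, after which $?A$ is simply unbound), matching $\semr{\nil_A} = \{t\}$ with $t.A = \nil$ and hence $\tr$ of that tuple being the empty mapping.

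The main obstacle I expect is not any single operator but the bookkeeping needed to make the induction go through: the inductive hypothesis must be strong enough to carry along not just the equation $\tr(\semr{\psi}) = \semp{\psi^\star}{G}$ but also the precise set of variables that $\psi^\star$ binds (namely $\{?A : A \in \att(\psi)\}$) and the guarantee that every mapping in $\semp{\psi^\star}{G}$ has domain \emph{exactly} $\{?A : A \in \att(\psi), \text{value non-null}\}$ with no stray variables — otherwise join, difference and projection can misbehave. A second delicate point is the interaction of $\OPT$ with later operators: because the leaf translation introduces $\OPT$, intermediate mappings are genuinely partial, and one must confirm that $\FILTER$ conditions using $\bound$ and $=$ evaluate as intended on these partial mappings (in particular that $\sigma_{A\neq a}$, which in the relational semantics of item~3 excludes both the value $a$ \emph{and} null, is matched by a built-in condition that is false when $?A$ is unbound). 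Checking these two invariants carefully at every inductive case is where the real work lies; once they are in place, the verification of each operator is the routine unwinding of the two semantic definitions side by side. I would present the full family of translations and the invariant as a single lemma proved by induction, from which the theorem is immediate.
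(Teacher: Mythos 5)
Your overall skeleton (structural induction, $\OPT$-chains at the leaves to absorb nulls, $\SELECT$ for projection and rename, $\FILTER$ built from $\bound$ and $=$ for the four selections, $\UNION$ for union, $\egp$ for $\nil_A$) is the paper's, but three of your steps would fail as stated. First, your single leaf case is wrong for binary relations: in $\DM$ a relation $R$ with $\IsBinRel(R)$ is not a class, so no class IRI, no table triples and no literal triples are generated for it, and your pattern $(?x,\rdftype,u_R)$ with $\OPT$-joined datatype properties returns the empty answer instead of $R^I$. The paper needs a separate base case that queries the object-property triple $(?T_1,r,?T_2)$ together with the datatype properties of the two \emph{referenced} attributes in $S$ and $T$. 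Second, translating $\nj$ to bare $\AND$ is incorrect, and your justification is exactly backwards: SPARQL compatibility only requires agreement on the shared part of the domains, so a mapping in which $?A$ is unbound (the attribute is $\nil$) is compatible with any mapping binding $?A$, and $\AND$ merges them, whereas item 6 of the relational semantics requires $t_1.A=t_2.A\neq\nil$. The paper therefore wraps both operands in $\FILTER\,(\bound(?A_1)\wedge\cdots\wedge\bound(?A_\ell))$ before taking $\AND$.

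Third, and most seriously, difference cannot be a single $\MINUS$. Because nulls become missing bindings, the mappings produced by $Q_1^\star$ and $Q_2^\star$ have domains that vary per tuple, not the fixed set $\{?A_1,\dots,?A_\ell\}$ your invariant assumes. Under the $\MINUS$ semantics a mapping with empty domain is never removed (its domain intersection with anything is empty), so an all-null tuple occurring in both $Q_1$ and $Q_2$ wrongly survives; conversely, a mapping bound on a strict subset of the attributes can be wrongly cancelled by a $Q_2$-mapping bound on a superset that merely agrees on the overlap. The paper's construction stratifies by the set $\X$ of bound variables, taking $((Q_1^\star \vc R_\X) \minus (Q_2^\star \vc R_\X))$ for each nonempty $\X$ and a separate $\OPT$/$\neg\bound$ gadget for the empty-domain stratum, then unions the resulting $2^\ell$ patterns. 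Without this stratification (or an equivalent device) the induction does not close, so the proposal as written does not establish the theorem.
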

In \cite{AG08}, it was proved that SPARQL has the same expressive
power as relational algebra. Thus, one may be tempted to think that
this result could be used to prove Theorem \ref{theo-dm-qp}. However,
the version of relational algebra considered in \cite{AG08} does not
include the null value $\nil$, and hence cannot be used to prove our
result.  In addition to this, other researchers have addressed the
issue of querying answering on DL ontologies with relational databases
\cite{SFD09}. Our work is similar in the sense that we address the
issue of query preservation between a database and an
ontology. However, the main difference is that rather than a domain
ontology, the ontology we use is synthesized in a standard way from
the database schema. Therefore, their results cannot be directly
applied to our setting.

We present an outline of the proof of this theorem, and refer the
reader to 
the Appendix for the details. Assume given a relational
schema $\R$ and a set $\Sigma$ of PKs and FKs over $\R$. Then we have to
show that for every relational algebra query $Q$ over $\R$, there
exists a SPARQL query $Q^\star$ such that for every instance $I$ of
$\R$ (possibly including null values) satisfying $\Sigma$:
\begin{eqnarray}
\label{eq-equiv}
\tr(\semr{Q}) & = & \semp{Q^\star}{\DM(\R,\Sigma,I)}.
\end{eqnarray}
\begin{sloppypar}
\noindent
Interestingly, the proof that the previous condition holds is by
induction on the structure of $Q$, and thus it gives us a bottom-up
algorithm for translating $Q$ into an equivalent SPARQL query
$Q^\star$, that is, a query $Q^\star$ satisfying condition
\eqref{eq-equiv}. In what follows, we consider the database used as
example in Section \ref{sec-dm} and the relational algebra
query $\sigma_{\text{\tt Name} = \text{\tt Juan}}(\text{\tt STUDENT})
\ \nj \ \text{\tt ENROLLED}$, which we will use as a running example and
translate it step by step to SPARQL, showing how the translation
algorithm works.
\end{sloppypar}

For the sake of readability, we introduce a function $\nu$ that
retrieves the IRI for a given relation $R$, denoted by $\nu(R)$, and
the IRI for a given attribute $A$ in a relation $R$, denoted by
$\nu(A,R)$. 
The inductive proof starts by considering the two base relational
algebra queries: the identity query $R$, where $R$ is a relation name
in the relational schema $\R$, and the query $\nil_A$. These two base
queries give rise to the following three base cases for the inductive
proof.


\smallskip

\begin{sloppypar}
\noindent
{\bf Non-binary relations:} Assume that $Q$ is the identity relational
algebra query $R$, where $R \in \R$ is a non-binary relation (that is,
$\IsBinRel(R)$ does not hold). Moreover, assume that $\att(R) = \{A_1,
\dots, A_\ell\}$, with the corresponding IRIs $\nu(R) = r, \nu(A_1,R)
= a_1, \ldots, \nu(A_\ell,R) = a_\ell$. Then a SPARQL query $Q^\star$
satisfying \eqref{eq-equiv} is constructed as follows:
\end{sloppypar}

\vspace*{-10pt}

\begin{align*}
\SELECT \ &\{?A_1, \ldots, ?A_\ell\} \ \bigg[ \cdots
\bigg(\bigg(\bigg((?X, \rdftype, r) \\  
&\opt (?X, a_1, ?A_1)\bigg) \opt (?X, a_2, ?A_2)\bigg)\\ 
&\opt (?X, a_3, ?A_3)\bigg) \cdots \opt (?X, a_\ell, ?A_\ell)\bigg].  
\end{align*}
\begin{sloppypar}
\noindent
Notice that in order to not lose information, the operator $\OPT$ is
used (instead of $\AND$) because the direct mapping $\DM$ does not
translate $\nil$ values. In our example, the relation name
{\tt STUDENT} is a non-binary relation. Therefore the following
equivalent SPARQL query is generated with input {\tt STUDENT}:
\end{sloppypar}

\vspace*{-10pt}

{\small
\begin{align*}
\SELECT \ &\{\text{\tt ?SID}, \text{\tt ?NAME} \} \ \bigg[
\bigg((?X, \rdftype, {\text{\tt :STUDENT}}) \\
& \opt (?X, {\text{\tt :STUDENT\#SID}}, \text{\tt ?SID})\bigg)\\ 
& \opt (?X, {\text{\tt :STUDENT\#NAME}}, \text{\tt ?NAME})\bigg] 
\end{align*}
\bw{It}} should be noticed that in the previous query, the symbol {\tt
:} has to be replaced by the base IRI used when generating IRIs for
relations and attributes in a relation (see Section
\ref{sec-iri}) \footnote{In SPARQL terminology, we have included the
following prefix in the query: {\tt
@prefix\,\,:\,\,<http://example.edu/db/>}, if the base IRI is 
{\tt <http://example.edu/db/>}.}.

\smallskip

\noindent
{\bf Binary relations:} Assume that $Q$ is the identity relational
algebra query $R$, where $R \in \R$ is a binary relation (that is,
$\IsBinRel(R)$ holds). Moreover, assume that $\att(R) = \{A_1, A_2\}$,
where $A_1$ is a foreign key referencing the attribute $B$ of a
relation $S$, and $A_2$ is a foreign key referencing the attribute $C$
of a relation $T$. Finally, assume that $\nu(R) = r$, $\nu(B,S) = b$
and $\nu(C,T) = c$, Then a SPARQL query $Q^\star$
satisfying \eqref{eq-equiv} is defined as follows:

\vspace*{-10pt}

\begin{multline*}
\SELECT \ \{?A_1, ?A_2\} \ ((?T_1, r, ?T_2) \ \andp\\ 
(?T_1, b,?A_1) \andp (?T_2, c, ?A_2)).
\end{multline*}
Given that a binary relation is mapped to an object property, the
values of a binary relation can be retrieved by querying the datatype
properties of the referenced attributes. In our example, the
relational name {\tt ENROLLED} is a binary relation. Therefore the
following equivalent SPARQL query is generated with input {\tt ENROLLED}:

\vspace*{-10pt}

{\small
\begin{align*}
\SELECT \ &\{\text{\tt ?SID}, \text{\tt ?CID}\} (\\
& (?T_1, {\text{\tt :ENROLLED\#SID,CID,SID,CID}}, ?T_2) \ \andp\\
& (?T_1, {\text{\tt :STUDENT\#SID}},\text{\tt ?SID}) \ \andp\\
& (?T_2, {\text{\tt :COURSE\#CID}}, \text{\tt ?CID})). 
\end{align*}
}

\noindent
{\bf Empty relation:} Assume that $Q = \nil_A$, and define $Q^\star$
as the empty graph pattern $\egp$. Then we have that condition
\eqref{eq-equiv} holds because of the definition of the function $\tr$,
which does not translate $\nil$ values to mappings.

\smallskip

We now present the inductive step in the proof of Theorem
\ref{theo-dm-qp}. Assume that the theorem holds for relational algebra
queries $\Qone$ and $\Qtwo$. That is, there exists SPARQL queries
$\Qstarone$ and $\Qstartwo$ such that:
\begin{eqnarray}
\label{eq-equiv-1}
\tr(\semr{\Qone}) & = & \semp{\Qstarone}{\DM(\R,\Sigma,I)},\\  
\label{eq-equiv-2}
\tr(\semr{\Qtwo}) & = & \semp{\Qstartwo}{\DM(\R,\Sigma,I)}.
\end{eqnarray}
The proof continues by presenting equivalent SPARQL queries for the
following relational algebra operators: selection ($\sigma$),
projection ($\pi$), rename ($\delta$), join ($\nj$), union ($\cup$)
and difference ($\smallsetminus$). It is important to notice that the
operators left-outer join, right-outer join and full-outer join are
all expressible with the previous operators, hence we do not present
cases for these operators.

\smallskip

\noindent{\bf Selection:} We need to consider four cases to define
query $Q^\star$ satisfying condition (\ref{eq-equiv}). In all these
cases, we use the already established equivalence \eqref{eq-equiv-1}.
\begin{alsl}
\item[1.] If $Q$ is $\sigma_{A_1 = a}(\Qone)$, then
\begin{eqnarray*}
Q^\star & = & (\Qstarone \vc (?A_1 = a)).
\end{eqnarray*}

\item[2.] If $Q$ is $\sigma_{A_1 \neq a}(\Qone)$, then 
\begin{eqnarray*}
Q^\star & = & (\Qstarone \vc (\neg(?A_1 = a) \wedge \bound(?A_1))).
\end{eqnarray*}

\item[3.] If $Q$ is $\sigma_{\isn(A_1)}(\Qone)$, then 
\begin{eqnarray*}
Q^\star & = & (\Qstarone \vc(\neg\bound(?A_1))).
\end{eqnarray*}

\item[4.] If $Q$ is $\sigma_{\isnn(A_1)}(\Qone)$, then 
\begin{eqnarray*}
Q^\star & = & (\Qstarone \vc(\bound(?A_1))).
\end{eqnarray*}
\end{alsl}
These equivalences are straightforward. However, it is important to
note the use of $\bound(\cdot)$ in the second case; as the semantics
of relational algebra states that if $Q$ is the query $\sigma_{A_1
\neq a}(\Qone)$, then $\semr{Q} = \{ t \in \semr{\Qone} \mid t.A_1
\neq \nil \text{ and } t.A_1 \neq a\}$, we have that the variable
$?A_1$ has to be bound because the values in the attribute $A_1$ in
the answer to $\sigma_{A_1 \neq a}(\Qone)$ are different from
$\nil$. Following our example, we have that the following
SPARQL query is generated with input $\sigma_{\text{\tt Name} =
\text{\tt Juan}}(\text{\tt STUDENT})$:

\vspace*{-10pt}

{\small
\begin{align*}
\bigg(\SELECT \ &\{\text{\tt ?SID}, \text{\tt ?NAME} \} \ \bigg[
\bigg((?X, \rdftype, {\text{\tt :STUDENT}}) \\
& \opt (?X, {\text{\tt :STUDENT\#SID}}, \text{\tt ?SID})\bigg)\\ 
& \opt (?X, {\text{\tt :STUDENT\#NAME}}, \text{\tt ?NAME})\bigg]\bigg)\\
& \hspace{100pt} \vc (\text{\tt ?NAME} = \text{\tt Juan})
\end{align*}}

\begin{sloppypar}
\noindent
{\bf Projection:} Assume that $Q = \pi_{\{A_1, \dots,
A_\ell\}}(\Qone)$. Then query $Q^\star$ satisfying condition
\eqref{eq-equiv} is defined as $\sel{\{?A_1,
\ldots,?A_\ell\}}{\Qstarone}$. It is important to notice that we use
nested $\SELECT$ queries to deal with projection, as well as in two of
the base cases, which is a functionality specific to SPARQL 1.1~\cite{HS11}.
\end{sloppypar}

\smallskip

\noindent{\bf Rename:}  Assume that $Q = \delta_{A_1 \to B_1}(\Qone)$
and $\att(Q) = \{A_1$, $\ldots$, $A_\ell\}$. Then query $Q^\star$
satisfying condition \eqref{eq-equiv} is defined as $\sel{\{?A_1 \as
?B_1, ?A_2, \dots , ?A_\ell\}}{\Qstarone}$. Notice that this
equivalence holds because the rename operator in relational algebra
renames one attribute to another and projects all attributes of $Q$.
 
\smallskip

\noindent{\bf Join:}  Assume that $Q = (Q_1 \nj Q_2)$, where
$(\att(Q_1) \cap \att(Q_2)) = \{A_1, \ldots, A_\ell\}$. Then query
$Q^\star$ satisfying condition \eqref{eq-equiv} is defined as follows:
\begin{multline*}
\bigg[\bigg(Q_1^\star \vc (\bound(?A_1) \wedge \cdots
\wedge \bound(?A_\ell))\bigg) \ \andp \\ 
\bigg(Q_2^\star \vc (\bound(?A_1) \wedge \cdots \wedge
\bound(?A_\ell))\bigg)\bigg]. 
\end{multline*}
\begin{sloppypar}
\noindent
Note the use of $\bound(\cdot)$ which is necessary in the SPARQL query
in order to guarantee that the variables that are being joined on are
not null. Following our example, Figure \ref{fig-trans} shows
the SPARQL query generated with input $\sigma_{\text{\tt Name} =
\text{\tt Juan}}(\text{\tt STUDENT}) \ \nj \ \text{\tt ENROLLED}$.
\end{sloppypar}

\begin{figure*}
{\small
\begin{align*}
&\bigg[\bigg(\bigg(\SELECT \ \{\text{\tt ?SID}, \text{\tt ?NAME} \} \
\bigg[\bigg((?X, \rdftype, {\text{\tt :STUDENT}}) \opt (?X, {\text{\tt
:STUDENT\#SID}}, \text{\tt ?SID})\bigg) \ \opt\\ 
&\hspace{150pt}(?X, {\text{\tt :STUDENT\#NAME}}, \text{\tt
?NAME})\bigg]\bigg) \vc (\text{\tt ?NAME} = \text{\tt Juan})\bigg) \vc
(\bound(\text{\tt ?SID}))\bigg]\\
&\hspace{200pt} \andp\\
&\bigg[\bigg(\SELECT \ \{\text{\tt ?SID}, \text{\tt ?CID}\}
\bigg((?T_1, {\text{\tt :ENROLLED\#SID,CID,SID,CID}}, ?T_2) \andp (?T_1, {\text{\tt
:STUDENT\#SID}},\text{\tt  ?SID}) \ \andp\\ 
& \hspace{263pt} (?T_2, {\text{\tt :COURSE\#CID}}, \text{\tt ?CID})\bigg)\bigg)
\vc (\bound(\text{\tt ?SID}))\bigg]
\end{align*}
}

\vspace*{-14pt}

\caption{SPARQL translation of the relational algebra query
$\sigma_{\text{\tt Name} = \text{\tt Juan}}(\text{\tt STUDENT}) \ \nj
\ \text{\tt ENROLLED}$. \label{fig-trans}}
\end{figure*}

\smallskip

\noindent{\bf Union:} Assume that $Q = (Q_1 \cup Q_2)$. Then query $Q^\star$
satisfying condition \eqref{eq-equiv} is simply defined as $(Q_1^\star
\uni Q_2^\star)$. Notice that in this case we are using the already
established equivalences \eqref{eq-equiv-1} and~\eqref{eq-equiv-2}.

\smallskip

\noindent{\bf Difference:} We conclude our proof by assuming that $Q =
(Q_1 \smallsetminus Q_2)$. In this case, it is also possible to define
a SPARQL query $Q^\star$ satisfying condition (\ref{eq-equiv}). 
We refer the reader to 
the appendix for the complete description of $Q^\star$.

\subsection{Monotonicity and semantics preservation of $\DM$}
\noindent
Finally, we consider the two desirable properties identified in
Section \ref{sec-desi}. First, it is straightforward to see that $\DM$
is monotone, because all the negative atoms in the Datalog rules
defining $\DM$ refer to the schema, the PKs and the FKs of the
database, and these elements are kept fixed when checking
monotonicity. Unfortunately, the situation is completely different for
the case of semantics preservation, as the following example shows
that the direct mapping $\DM$ does not satisfy this property.
\begin{example}\label{sp-example} {\em Assume that
a relational schema contains a relation with name {\tt STUDENT} and
attributes {\tt SID}, {\tt NAME}, and assume that the attribute {\tt
SID} is the primary key.  Moreover, assume that this relation has two
tuples, $t_1$ and $t_2$ such that $t_1.\text{\tt SID}= \text{1}$,
$t_1.\text{\tt NAME} = \text{John}$ and $t_2.\text{\tt SID} =
\text{1}$, $t_2.\text{\tt NAME} = \text{Peter}$. It is clear that the
primary key is violated, therefore the database is
inconsistent. However, it is not difficult to see that after applying
$\DM$, the resulting RDF graph is consistent. \qed}
\end{example}
In fact, the result in Example \ref{sp-example} can be generalized as
it is possible to show that the direct mapping $\DM$ always generates
a consistent RDF graph, hence, it cannot be semantics preserving.
\begin{proposition}\label{prop-dm-sp}
The direct mapping $\DM$ is not semantics preserving.
\end{proposition}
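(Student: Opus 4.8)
The plan is to prove the stronger claim that for every relational schema $\R$, every set $\Sigma$ of PKs and FKs over $\R$, and every instance $I$ of $\R$, the RDF graph $\DM(\R,\Sigma,I)$ is consistent under OWL semantics; this immediately contradicts the ``only if'' direction of semantics preservation (taking any $I$ that violates some $\varphi \in \Sigma$, e.g.\ the instance in Example~\ref{sp-example}), so $\DM$ cannot be semantics preserving. Thus the real work is to exhibit, for an arbitrary input, an OWL model of the generated graph.

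First I would inspect exactly which OWL vocabulary terms $\DM$ can ever place in a triple. Going through the rules in Sections~\ref{sec-dms} and~\ref{sec-dmi}, the only predicates and classes used are \rdftype, \owlClass, \owlOP, \owlDTP, \domain\ and \range; in particular $\DM$ never emits \owlhk, \owldf, \owlsa, disjointness axioms, cardinality restrictions, or any other construct capable of forcing an inconsistency. So the generated graph is, semantically, nothing more than: a collection of class declarations, object-property declarations with domains and ranges, datatype-property declarations with domains, plus a set of ground ``instance'' triples (table triples $(u,\rdftype,w)$, reference triples $(u,v,w)$ with $v$ an object property, and literal triples $(u,v,w)$ with $v$ a datatype property and $w$ a literal). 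Intuitively such an ontology is always satisfiable because none of these axioms can ever conflict — domain and range axioms only \emph{derive} new class memberships, they never rule anything out.

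The key step is then to construct an explicit interpretation $\mathcal{I}$ and verify it is a model. Take the object domain $\Delta^{\mathcal{I}}$ to be the set of all IRIs and blank nodes appearing as subjects or objects of instance triples in $\DM(\R,\Sigma,I)$ (or a single fresh element if that set is empty), and the data domain to contain all literals that occur. Interpret each class IRI $c$ as the set of all $x$ such that either $(x,\rdftype,c)$ is a table triple, or $x$ is forced into $c$ by being in the subject position of an object/datatype property whose domain is $c$, or in the object position of an object property whose range is $c$ — equivalently, close the table triples under the domain/range axioms. Interpret each object property $p$ as exactly the set of pairs $(x,y)$ with $(x,p,y)$ a reference triple, and each datatype property analogously on literal triples. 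By construction every declaration triple and every instance triple is satisfied, and the domain/range axioms hold because we closed the class extensions under precisely those axioms; one checks there are no other constraints to satisfy. Hence $\mathcal{I}$ is a model and $\DM(\R,\Sigma,I)$ is consistent.

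I expect the main obstacle to be bookkeeping rather than depth: one must argue carefully and exhaustively that \emph{no} rule of $\DM$ — across all the families indexed by $n$ — ever generates a triple using an OWL construct with ``negative'' force (disjointness, functionality, \owlhk, etc.), since it is exactly such a construct that would be needed to detect the primary-key violation in Example~\ref{sp-example}. Once that syntactic audit is done, the model construction is routine. A lighter-weight alternative, which I would mention, is to observe that the fragment of OWL actually produced by $\DM$ lies within (the RDF encoding of) a description logic TBox consisting only of concept/role declarations and domain/range inclusions together with a ground ABox, and appeal to the standard fact that such knowledge bases are always satisfiable; this avoids writing the interpretation out by hand but relies on the same vocabulary audit.
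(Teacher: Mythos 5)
Your proposal is correct, but it takes a different and strictly stronger route than the paper. The paper's own proof of Proposition~\ref{prop-dm-sp} is purely by example: it fixes the violating instance of Example~\ref{sp-example} (relation {\tt STUDENT} with primary key {\tt SID} and two tuples sharing {\tt SID}~$=1$), explicitly lists the eight triples that $\DM$ produces for it, and observes that this particular graph is consistent, so the ``only if'' direction of semantics preservation fails. You instead prove the general claim that \emph{every} graph $\DM(\R,\Sigma,I)$ is consistent, via a vocabulary audit (only \rdftype, \owlClass, \owlOP, \owlDTP, \domain, \range\ ever appear, never \owlhk, \owldf, disjointness or cardinality) followed by an explicit model construction that closes the asserted class memberships under the domain/range axioms. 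That general statement is exactly the one the paper asserts in the main text (``it is possible to show that the direct mapping $\DM$ always generates a consistent RDF graph'') but never actually proves; your argument supplies that missing generalization, at the cost of the exhaustive syntactic audit over all rule families that you correctly identify as the real bookkeeping burden. The paper's approach is lighter --- one concrete instance and an appeal to ``it is not difficult to see'' that the listed graph is consistent --- while yours subsumes it and would also justify the stronger sentence preceding the proposition. Both are valid proofs of the proposition as stated.
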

Does this mean that our direct mapping is incorrect? What could we do
to create a direct mapping that is semantics preserving? These problems
are studied in depth in the following section.

\section{Semantics Preservation of\\ Direct Mappings}
\noindent
We now study the problem of generating a
semantics-preserving direct mapping. Specifically, we show in Section
\ref{sec-pk} that a simple extension of the direct mapping $\DM$ can
deal with primary keys. Then we show in Section \ref{sec-pk-fk} that
dealing with foreign keys is more difficult, as any direct mapping
that satisfies the condition of being monotone cannot be
semantics preserving. Finally, we present
two possible ways of overcoming this limitation.

\subsection{A semantics preserving direct mapping for primary keys}
\label{sec-pk}
\noindent
Recall that a primary key can be violated if there are repeated values or null values.
At a first glance, one would assume that owl:hasKey could be used to 
create a semantics preserving direct mapping for primary keys.  
If we consider a database without null values, a violation of the primary key would
generate an inconsistency with owl:hasKey and the unique name assumption (UNA). 
However, if we consider a database with null values, 
then owl:hasKey with the UNA does not generate an inconsistency because it is trivially satisfied for a class expression that does not have a value for the datatype expression. Therefore, we must consider a different approach.

Consider a new direct mapping $\DM_\text{\it pk}$ that extends $\DM$
as follows. A Datalog rule is used to determine if the value of a
primary key attribute is repeated, and a family of Datalog rules are
used to determine if there is a value $\nil$ in a column corresponding
to a primary key. If some of these violations are found, then an
artificial triple is generated that would produce an
inconsistency. For example, the following rules are used to map a
primary key with two attributes:

\vspace*{-10pt}

{\small
\begin{align*}
\Triple(a, \text{\tt "o}&\text{\tt wl:differentFrom"}, a) \leftarrow
\PK_2(X_1, X_2, R),\\   
&\Value(V_1, X_1, T_1, R), \Value(V_1, X_1, T_2, R),\\
&\Value(V_2, X_2, T_1, R), \Value(V_2, X_2, T_2, R), T_1 \neq T_2\\
\Triple(a, \text{\tt "o}&\text{\tt wl:differentFrom"}, a) \leftarrow
\PK_2(X_1, X_2, R),\\  
&\Value(V, X_1, T, R), V = \nil
\end{align*}
\begin{align*}
\Triple(a, \text{\tt "o}&\text{\tt wl:differentFrom"}, a) \leftarrow
\PK_2(X_1, X_2, R),\\  
&\Value(V, X_2, T, R), V = \nil
\end{align*}
\bw{In}} the previous rules, $a$ is any valid IRI. If we apply
$\DM_\text{\it pk}$ to the database of Example \ref{sp-example}, it is
straightforward to see that starting from an inconsistent relational
database, one obtains an RDF graph that is also inconsistent. In fact,
we have that:
\begin{sloppypar}
\begin{proposition}\label{prop-dmpk-sp}
The direct mapping $\DM_\text{\it pk}$ is information preserving, query
preserving, monotone, and semantics preserving if one considers only
PKs. That is, for every relational schema $\R$, set $\Sigma$ of (only)
PKs over $\R$ and instance $I$ of $\R$:
$I \models \Sigma$ \ iff \ $\DM_\text{\it pk}(\R, \Sigma, I)$ is
consistent under OWL semantics.  
\end{proposition}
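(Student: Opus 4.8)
The plan is to verify the four properties one at a time, in each case reducing to what is already known about $\DM$. The single fact driving the whole argument is that every rule added in $\DM_\text{\it pk}$ fires only when some primary key of $\Sigma$ is violated (either by a duplicated key value or by a $\nil$ in a key column); hence for \emph{every} instance $I$ with $I \models \Sigma$ we have $\DM_\text{\it pk}(\R,\Sigma,I) = \DM(\R,\Sigma,I)$. Since both Definition~\ref{def-ip} and the definition of query preservation quantify only over instances satisfying $\Sigma$, the computable inverse $\N : \rdf \to \ins$ witnessing Theorem~\ref{theo-dm-ip} and, for each relational algebra query $Q$, the SPARQL query $Q^\star$ witnessing Theorem~\ref{theo-dm-qp} serve verbatim for $\DM_\text{\it pk}$; so $\DM_\text{\it pk}$ is information preserving and query preserving.

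For monotonicity, I would observe that the rules added in $\DM_\text{\it pk}$ are positive: their bodies mention only the base predicates $\PK_n$ and $\Value$ together with (in)equalities and the test $V = \nil$, and contain no negated atom. The only instance-dependent predicate occurring in them is $\Value$, whose extension can only grow when passing from $I_1$ to $I_2$ with $I_1 \subseteq I_2$. Hence the set of artificial triples can only grow, and combined with the monotonicity of $\DM$ (already argued in the body, since all its negated atoms refer to the fixed schema, PKs and FKs) this yields $\DM_\text{\it pk}(\R,\Sigma,I_1) \subseteq \DM_\text{\it pk}(\R,\Sigma,I_2)$, as required by Definition~\ref{def-mono-dm}.

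It remains to prove semantics preservation when $\Sigma$ consists only of primary keys. For the forward direction, if $I \models \Sigma$ then $\DM_\text{\it pk}(\R,\Sigma,I) = \DM(\R,\Sigma,I)$, which is consistent under OWL semantics --- this is precisely the fact established in the proof of Proposition~\ref{prop-dm-sp}, namely that $\DM$ always produces a consistent RDF graph. For the converse I would argue the contrapositive. If $I \not\models \Sigma$ then, as $\Sigma$ contains only keys, some $R[A_1,\ldots,A_n] \in \Sigma$ fails, so either (1) there is a tuple $t \in R^I$ and an index $k$ with $t.A_k = \nil$, or (2) there are distinct tuples $t_1, t_2 \in R^I$ that agree on $A_1,\ldots,A_n$. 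In case (1) the encoding of $I$ contains $\Value(\nil, A_k, T, R)$ for the identifier $T$ of $t$, so the arity-$n$ analogue of the second family of added rules derives $\Triple(a, \owldf, a)$; in case (2) the encoding contains $\Value(t_1.A_j, A_j, T_1, R)$ and $\Value(t_2.A_j, A_j, T_2, R)$ for all $j$, with $T_1 \neq T_2$ since distinct tuples receive distinct identifiers under the encoding of Section~\ref{sec-sro}, so the arity-$n$ analogue of the first added rule derives $\Triple(a, \owldf, a)$. Either way $\DM_\text{\it pk}(\R,\Sigma,I)$ contains the triple $(a, \owldf, a)$, which asserts that the individual $a$ is distinct from itself; no OWL model satisfies this, so the graph is inconsistent.

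The only points that need care are the bookkeeping that pairs the two semantic ways a key can be violated with the two families of added rules, and the routine lift of the rules displayed in the text from arity $2$ to arbitrary arity $n$; the forward direction of semantics preservation rests on the auxiliary fact --- already needed for Proposition~\ref{prop-dm-sp} --- that $\DM$ never produces an inconsistent graph, and everything else is inherited directly from the properties already established for $\DM$. I do not expect a genuine obstacle here; the work was in setting up $\DM_\text{\it pk}$ so that these checks go through.
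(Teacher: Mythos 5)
Your proposal is correct and follows essentially the same route as the paper's (very terse) proof: the first three properties are inherited from $\DM$, and semantics preservation is argued by matching the two ways a primary key can fail (a $\nil$ in a key column, or two distinct tuples agreeing on the key) with the two families of added rules, each of which emits the unsatisfiable triple $(a,\owldf,a)$. Your key observation that $\DM_\text{\it pk}(\R,\Sigma,I)=\DM(\R,\Sigma,I)$ whenever $I\models\Sigma$ is in fact a slightly cleaner justification for information and query preservation (and for the forward direction of semantics preservation) than the paper's appeal to monotonicity of the added rules, but it is the same argument in substance.
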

\end{sloppypar}
\noindent
Information preservation, query preservation and monotonicity of
$\DM_\text{\it pk}$ are corollaries of the fact that these properties
hold for $\DM$, and of the fact that the Datalog rules introduced to
handle primary keys are monotone. 

A natural question at this point is whether $\DM_{\text{\it pk}}$ can
also deal with foreign keys. Unfortunately, it is easy to construct an
example that shows that this is not the case. Does this mean that we
cannot have a direct mapping that is semantics preserving and
considers foreign keys?  We show in the following section that
monotonicity has been one of the obstacles to obtain such a mapping.





\subsection{Semantics preserving direct mappings for primary keys and
foreign keys}
\label{sec-pk-fk}
\noindent
The following theorem shows that the desirable condition of being
monotone is, unfortunately, an obstacle to obtain a semantics
preserving direct mapping.
\begin{theorem}\label{theo-mono-map-no}
No monotone direct mapping is semantics preserving.
\end{theorem}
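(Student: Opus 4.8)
The plan is to argue by contradiction: suppose some direct mapping $\M$ is \emph{both} monotone and semantics preserving, and derive a contradiction. The key observation is that consistency under OWL semantics is preserved downwards under RDF-graph inclusion: since OWL is a classical (monotonic) logic, any model of a graph $G$ is also a model of every subgraph $G' \subseteq G$, so if $G' \subseteq G$ and $G'$ is inconsistent then $G$ is inconsistent as well. Combined with the monotonicity of $\M$, this yields: whenever $I_1 \subseteq I_2$ and $\M(\R,\Sigma,I_1)$ is inconsistent, $\M(\R,\Sigma,I_2)$ is inconsistent too. Hence it suffices to exhibit a relational schema $\R$, a legal set $\Sigma$ of PKs and FKs over $\R$, and two instances $I_1 \subseteq I_2$ of $\R$ with $I_1 \not\models \Sigma$ but $I_2 \models \Sigma$. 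This is exactly where foreign keys are needed: unlike a primary-key violation, a foreign-key violation can be \emph{repaired by adding} tuples (namely the missing referenced tuple), so a violating instance can sit below a satisfying one.

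Concretely, I would take $\R$ to consist of two relation names $R$ and $S$ with $\att(R) = \{A\}$ and $\att(S) = \{B\}$, and set $\Sigma = \{\, S[B],\ R[A] \fk S[B] \,\}$, which is a legal set of PKs and FKs over $\R$: it contains one key and one foreign key, and no two distinct keys mention the same relation name. Fix a value $a \in \D$ and define $I_1$ by $R^{I_1} = \{t\}$ with $t.A = a$ and $S^{I_1} = \emptyset$, and $I_2$ by $R^{I_2} = \{t\}$ and $S^{I_2} = \{s\}$ with $s.B = a$. Then $I_1 \subseteq I_2$; moreover $I_1 \not\models \Sigma$, since the tuple $t \in R^{I_1}$ has a non-$\nil$ value in $A$ that is matched by no tuple of $S^{I_1}$; and $I_2 \models \Sigma$, since $S^{I_2}$ is a singleton (so the key $S[B]$ holds) and $t.A = s.B$ (so the foreign key holds).

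Finally I would close the argument: by semantics preservation $\M(\R,\Sigma,I_1)$ is inconsistent and $\M(\R,\Sigma,I_2)$ is consistent; by monotonicity $\M(\R,\Sigma,I_1) \subseteq \M(\R,\Sigma,I_2)$; but then downward closure of consistency forces $\M(\R,\Sigma,I_2)$ to be inconsistent, a contradiction. I expect the only delicate points to be: (i) justifying cleanly that OWL consistency is monotone with respect to RDF-graph inclusion — this is the place where the remark that "the only form of constraint checking in OWL is satisfiability testing" is used, and one must appeal to the model-theoretic semantics of OWL (a model of the larger graph restricts to a model of the smaller one) rather than to any particular reasoning procedure; and (ii) verifying that the chosen $\Sigma$ genuinely satisfies the definition of a set of PKs and FKs, so that $(\R,\Sigma,I_1)$ and $(\R,\Sigma,I_2)$ are legitimate inputs of a direct mapping. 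Everything else is routine unwinding of the definitions.
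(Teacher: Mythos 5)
Your proposal is correct and follows essentially the same route as the paper's proof: assume a monotone, semantics-preserving $\M$, exhibit instances $I_1 \subseteq I_2$ where $I_1$ violates a foreign key that $I_2$ satisfies, and derive a contradiction from the fact that OWL consistency is inherited by subgraphs (equivalently, inconsistency propagates to supergraphs). The only difference is that you construct the witnessing schema and instances explicitly and verify that $\Sigma$ is a legal set of PKs and FKs, whereas the paper merely asserts their existence; this is a harmless elaboration, not a different argument.
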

It is important to understand the reasons why we have not been able to
create a semantics preserving direct mapping. The issue is with 
two characteristics of OWL: (1) it adopts the Open World
Assumption (OWA), where a statement cannot be inferred to be false on
the basis of failing to prove it, and (2) it does not adopt the Unique
Name Assumption (UNA), where two different names can identify the same
thing. On the other hand, a relational database adopts the Closed
World Assumption (CWA), where a statement is inferred to be false if
it is not known to be true.
In other
words, what causes an inconsistency in a relational database, can
cause an inference of new knowledge in OWL.

In order to preserve the semantics of the relational database, we need
to ensure that whatever causes an inconsistency in a relational
database, is going to cause an inconsistency in OWL.  
Following this idea, we now present a non-monotone direct mapping,
$\DM_{\text{\it pk}+\text{\it fk}}$, which extends $\DM_\text{\it pk}$
by introducing rules for verifying beforehand if there is a violation
of a foreign key constraint. If such a violation exists, then an
artificial RDF triple is created which will generate an inconsistency
with respect to the OWL semantics. More precisely, the following
family of Datalog rules are used in $\DM_{\text{\it pk}+\text{\it
fk}}$ to detect an inconsistency in a relational database:

\vspace*{-10pt}

{\small
\begin{align*}
\Violation(S) & \leftarrow\\
& \FK_n(X_1, \ldots, X_n, S, Y_1, \ldots, Y_n, T),\\ 
& \Value_n(V_1, X_1, T, S), \ldots, \Value(V_n, X_n, T, S),\\
& V_1 \neq \nil, \ldots,  V_n \neq \nil, \\
& \neg \IsValue_n(V_1, \ldots, V_n, Y_1, \ldots, Y_n,T)
\end{align*}
\bw{In}} the preceding rule, the predicate $\IsValue_n$ is used to
check whether a tuple in a relation has values for some given
attributes. The predicate $\IsValue_n$ is defined by the following
rule:

\vspace*{-10pt}

{\small
\begin{multline*}
\IsValue_n(V_1, \ldots, V_n,B_1, \ldots, B_n,S) \ \leftarrow \\ 
\Value(V_1, B_1, T, S), \ldots, \Value(V_n, B_n, T, S) 
\end{multline*}
\bw{Finally,}} the following Datalog rule is used to obtain
an inconsistency in the generated RDF graph:

\vspace*{-10pt}

{\small
\begin{eqnarray*}
\Triple(a, \owldf, a) & \leftarrow & \Violation(S)
\end{eqnarray*}
\bw{In}} the previous rule,  $a$ is any valid IRI. It should be noticed
that $\DM_{\text{\it pk}+\text{\it fk}}$ is non-monotone because if
new data in the database is added which now satisfies the FK
constraint, then the artificial RDF triple needs to be retracted. 

\begin{theorem}\label{theo-dm-sp-pk-fk}
The direct mapping $\DM_{\text{\it pk} + \text{\it fk}}$ is
information preserving, query preserving and semantics
preserving. 
\end{theorem}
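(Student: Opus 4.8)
The plan is to obtain information preservation and query preservation almost immediately from the corresponding results for $\DM$, and to concentrate the real work on semantics preservation. The key observation is that the only Datalog rules of $\DM_{\text{\it pk}+\text{\it fk}}$ that are not already rules of $\DM$ are the primary-key rules it inherits from $\DM_\text{\it pk}$ and the new $\Violation$/$\IsValue_n$ rules, and that all any of these can contribute to the output RDF graph is the single triple $\Triple(a, \owldf, a)$, which is derived exactly when $I$ violates some constraint of $\Sigma$ (a repeated or $\nil$ primary-key value, or a foreign-key reference that cannot be matched). Hence whenever $I \models \Sigma$ one has $\DM_{\text{\it pk}+\text{\it fk}}(\R, \Sigma, I) = \DM(\R, \Sigma, I)$.

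Since Definition \ref{def-ip} and the definition of query preservation both quantify only over instances $I$ with $I \models \Sigma$, this equality does all the work: I would prove information preservation by taking $\N$ to be precisely the computable inverse built for Theorem \ref{theo-dm-ip}, so that $\N(\DM_{\text{\it pk}+\text{\it fk}}(\R,\Sigma,I)) = \N(\DM(\R,\Sigma,I)) = I$ for every $I \models \Sigma$; and query preservation by reusing, unchanged, the bottom-up translation $Q \mapsto Q^\star$ of Theorem \ref{theo-dm-qp}, since condition \eqref{eq-equiv} already gives $\tr(\semr{Q}) = \semp{Q^\star}{\DM(\R,\Sigma,I)} = \semp{Q^\star}{\DM_{\text{\it pk}+\text{\it fk}}(\R,\Sigma,I)}$ for all $I \models \Sigma$. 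No fresh argument is needed for either.

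For semantics preservation I would prove both directions of the biconditional. Forward: if $I \models \Sigma$, then by the equality above the output equals $\DM(\R,\Sigma,I)$, which is consistent by the fact --- established in the course of Proposition \ref{prop-dm-sp} and generalising Example \ref{sp-example} --- that $\DM$ always produces a consistent RDF graph. Backward, by contraposition: suppose $I \not\models \Sigma$ and fix a violated $\varphi \in \Sigma$. If $\varphi$ is a primary key, then, exactly as argued for Proposition \ref{prop-dmpk-sp}, the inherited $\DM_\text{\it pk}$ rules derive $\Triple(a,\owldf, a)$. If $\varphi$ is a foreign key $R[A_1,\ldots,A_m] \fk S[B_1,\ldots,B_m]$, I would split into: (a) some tuple of $R^I$ has all of $A_1,\ldots,A_m$ non-null yet agrees with no tuple of $S^I$ on $B_1,\ldots,B_m$ --- then the body of the $\Violation$ rule (with $R$ as the referencing relation) is satisfied, the atoms $V_i \neq \nil$ matching the exclusion of clause (1) and $\neg\IsValue_n(\ldots)$ the failure of clause (2) of foreign-key satisfaction, so $\Triple(a, \owldf, a)$ is again derived; or (b) the columns $B_1,\ldots,B_m$ fail to be a key in $I$, which, under the standing assumption that a foreign key references a declared primary key (so $S[B_1,\ldots,B_m]$ is among the PKs of $\Sigma$), reduces to the primary-key case. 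In every case the output contains $(a,\owldf,a)$, and I would close with the lemma that any RDF graph with OWL vocabulary that contains $(a,\owldf, a)$ with $a \in \I$ is inconsistent: a model would have to interpret $a$ as an individual distinct from itself. (To avoid leaning on the semantics of a self-referential $\owldf$ assertion, the rules can equivalently emit $(a,\owlsa,b)$ together with $(a,\owldf,b)$ for a fresh IRI $b$, which is manifestly unsatisfiable.) As any supergraph of an inconsistent graph is inconsistent, the whole generated graph is inconsistent, which is what contraposition requires.

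I expect the backward direction of semantics preservation to be the main obstacle, and within it the \emph{completeness} of violation detection: one must verify that the $\Violation$/$\IsValue_n$ rules, together with the primary-key rules inherited from $\DM_\text{\it pk}$, are triggered by \emph{every} way a constraint of $\Sigma$ can be falsified --- in particular the clause of foreign-key satisfaction that requires the referenced columns to themselves form a key (which is where the assumption that foreign keys point to declared primary keys, or else an extra family of $\Violation$-style rules, becomes necessary), and the correct handling of nulls in the referencing tuple via the $V_i \neq \nil$ atoms. The only other point needing real care is the OWL-inconsistency lemma, which should be justified against the normative definition of OWL consistency rather than informally; the remainder is routine bookkeeping over the Datalog rules.
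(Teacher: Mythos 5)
Your proposal is correct and follows the same route as the paper: information and query preservation are inherited from $\DM$ (and $\DM_\text{\it pk}$) because the added rules contribute nothing to the output when $I \models \Sigma$, and semantics preservation is argued by showing that the artificial triple $(a, \owldf, a)$ is generated exactly when some constraint of $\Sigma$ fails. The one point where you are more careful than the paper's few-line argument is worth keeping: the definition of $I \models R[A_1,\ldots,A_m] \fk S[B_1,\ldots,B_m]$ also demands $I \models S[B_1,\ldots,B_m]$, and the $\Violation$ rule never tests this, so a duplicate or null in the referenced columns escapes detection unless those columns happen to be the declared primary key of $S$ (in which case the $\DM_\text{\it pk}$ rules fire); your proposed remedy --- assuming foreign keys reference declared primary keys, or adding a further family of $\Violation$-style rules --- is exactly what the paper's proof silently relies on. Your hedge about $(a,\owldf,a)$ is unnecessary, though: an assertion that an individual is $\owldf$ itself already has no model, so the auxiliary $\owlsa$ triple is not needed.
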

Information preservation and query preservation of $\DM_{\text{\it pk}
+ \text{\it fk}}$ are corollaries of the fact that these properties
hold for $\DM$ and $\DM_\text{\it pk}$.

A direct mapping that satisfies the four properties can be obtained by considering an 
alternative semantics of OWL that expresses
integrity constraints. Because OWL is based on Description Logic, we
would need a version of DL that supports integrity constraints, which
is not a new idea. Integrity constraints are epistemic in nature and
are about ``what the knowledge base knows''~\cite{Rei88}. Extending DL
with the epistemic operator {\bf K} has been studied
\cite{CGL+07,DLN+98,DNR02}.  Grimm et al. proposed to extend the
semantics of OWL to support the epistemic operator
\cite{GM05}. Motik et al. proposed to write integrity constraints as
standard OWL axioms but interpreted with different semantics for 
data validation purposes \cite{MHS09}. Tao et al. showed that integrity
constraint validation can be reduced to SPARQL query answering
\cite{TSB+10}. Recently, Mehdi et al. introduced a way to answer
epistemic queries to restricted OWL ontologies
\cite{MRG11}. Thus, it is possible to extend $\DM_\text{\it pk}$
to create an information preserving, query preserving and monotone
direct mapping that is also semantics preserving, but it is based on a
non-standard version of OWL including the epistemic operator {\bf K}.


\section{Concluding remarks}
\noindent
In this paper, we study how to directly map relational databases to an
RDF graph with OWL vocabulary based on two fundamental
properties (information preservation and query preservation) and two desirable properties (monotonicity 
and semantics preservation). We first present a monotone, information
preserving 
and query preserving direct
mapping considering databases that have null values. Then we prove that the combination of monotonicity with the
OWL semantics is an obstacle to generating a semantics preserving
direct mapping. Finally, we overcome this obstacle by presenting a
non-monotone direct mapping that is semantics preserving, and also by
discussing the possibility of generating a monotone mapping that
assumes an extension of OWL with the epistemic operator.

\smallskip

\noindent
{\bf Related Work:} Several approaches directly map relational 
schemas to RDFS and OWL. We refer the reader to the following survey
\cite{STC+12}. D2R Server has an option that directly maps the
relational database into RDF, however this process is not documented
\cite{D2R}. RDBToOnto presents a direct mapping that mines the content
of the relational databases in order to learn ontologies with deeper
taxonomies \cite{Cer08}.  Currently, the W3C RDB2RDF Working Group is
developing a direct mapping standard that focuses on translating
relational database instances to RDF \cite{APS11,BP11}.  

\smallskip

\noindent
{\bf Future Work:} We would like to extend our direct mapping to
consider datatypes, relational databases under bag semantics and
evaluate this rule based approach on large relational databases.  The
extension of our direct mapping to bag semantics is straightforward.
In our setting each tuple has its own identifier, which is
represented in the $\Value$ predicate. Thus, even if repeated tuples
exist, each tuple will still have its unique identifier and,
therefore, exactly the same rules can be used to map relational data
under bag semantics.

\smallskip


\noindent
\section{Acknowledgments}
\noindent
The authors would like to thank the anonymous referees for many
helpful comments, and the members of the W3C RDB2RDF Working group for
many fruitful discussions. J. F. Sequeda was supported by the NSF
Graduate Research Fellowship, M. Arenas by Fondecyt grant
\#1090565 and D.P. Miranker by NSF grant \#1018554.

\newpage
\onecolumn

\appendix

\renewcommand{\refapp}{{ADDITIONAL REFERENCES FOR THE APPENDIX}}


\section{Additional Operators in Relational algebra}
\label{sec-ra-app}
\noindent
It is important to notice that the operators left-outer join,
right-outer join and full-outer join are all
expressible with the previous operators. For example, assume that $R$
and $S$ are relation names such that $\att(R) \cap \att(S) =
\{A_1, A_2, \ldots, A_k\}$ and $\att(S) \smallsetminus \att(R) = \{B_1,
B_2, \ldots, B_\ell\}$, then the left-outer join for $R$ and $S$ is
defined by the following expression:  
{\small
\begin{align*}
\bigg[R \nj S\bigg] \cup \bigg[&\sigma_{\isn(A_1)}(R) \cup
\sigma_{\isn(A_2)}(R) \cup \cdots \cup \sigma_{\isn(A_k)}(R) \ \cup \\
& R \nj
\bigg(\sigma_{\isnn(A_1)}(\sigma_{\isnn(A_2)}(\cdots\sigma_{\isnn(A_k)}(\pi_{\{A_1,A_2,\ldots,A_k\}}(R))\cdots))
\smallsetminus \pi_{\{A_1,A_2,\ldots,A_k\}}(S)\bigg)\bigg] \nj\\
& \hspace{250pt} \bigg[\nil_{B_1} \nj \nil_{B_2} \nj \cdots \nj
\nil_{B_\ell}\bigg]. 
\end{align*}}
Similar expressions can be used to express the right-outer join and
the full-outer join.


\section{Semantics of SPARQL}
\label{sec-sparql-app}
\noindent
Let $P$ be a SPARQL graph pattern.  In the rest of the paper, we use
$\var(P)$ to denote the set of variables occurring in $P$. In
particular, if $t$ is a triple pattern, then $\var(t)$ denotes the set
of variables occurring in the components of $t$.  Similarly, for a
built-in condition $R$, we use $\var(R)$ to denote the set of
variables occurring in $R$. 

In what
follows, we present the semantics of graph patterns for a fragment of
SPARQL for which the semantics of nested SELECT queries is well
understood~\cite{HS11,AG10,AG11}. More specifically, in what follows
we focus on the class of graph patterns $P$ satisfying the following
condition: $P$ is said to be {\em non-parametric} if for every
sub-pattern $P_ 1 = \sel{\{?A_1 \as ?B_1, \dots , ?A_m \as ?B_m, ?C_1,
\dots , ?C_n\}}{P_2}$ of $P$ and every variable $?X$ occurring in $P$,
if $?X \in (\var(P_2) \smallsetminus \{?A_1, \ldots, ?A_m, ?C_1,
\ldots, ?C_n\})$, then $?X$ does not occur in
$P$ outside $P_1$.

To define the semantics of SPARQL graph pattern expressions, we need
to introduce some terminology.  A mapping $\mu$ is a partial function
$\mu: \V \rightarrow (\I \cup \L)$.  Abusing notation, for a triple
pattern $t$ we denote by $\mu(t)$ the triple obtained by replacing the
variables in $t$ according to $\mu$.  The domain of $\mu$, denoted by
$\dom(\mu)$, is the subset of $\V$ where $\mu$ is defined. Two
mappings $\mu_1$ and $\mu_2$ are compatible, denoted by $\mu_1 \sim
\mu_2$, when for all
$x\in\dom(\mu_1)\cap\dom(\mu_2)$, it is the case that
$\mu_1(x)=\mu_2(x)$, i.e. when $\mu_1\cup\mu_2$ is also a mapping.
The mapping with empty domain is denoted by $\mu_\emptyset$ (notice
that this mapping is compatible with any other mapping).  Given a
mapping $\mu$ and a set of variables $W$, the restriction of $\mu$ to
$W$, denoted by $\res{\mu}{W}$, is a mapping such that
$\dom(\res{\mu}{W})=\dom(\mu)\cap W$ and $\res{\mu}{W}(?X)=\mu(?X)$
for every $?X\in \dom(\mu)\cap W$. Finally, given a mapping $\mu$ and
a sequence $?A_1$, $\ldots$, $?A_m$, $?B_1$, $\ldots$, $?B_m$ of
pairwise distinct elements from $\V$ such that $\dom(\mu) \cap \{?B_1,
\ldots, ?B_m\} = \emptyset$, define $\ren{?A_1 \to ?B_1,
\ldots, ?A_m \to ?B_m}(\mu)$ as a mapping such that: 
\begin{multline*}
\dom(\ren{?A_1 \to ?B_1, \dots, ?A_m \to ?B_m}(\mu)) \ =  (\dom(\mu)
\smallsetminus \{?A_1, \dots, ?A_m\}) \cup \{?B_i \mid i \in \{1,
\ldots, m\} \text{ and } ?A_i \in \dom(\mu)\},
\end{multline*}
and for every $x \in \dom(\ren{?A_1 \to ?B_1, \dots, ?A_m \to
?B_m}(\mu))$: 
\begin{equation*}
\ren{?A_1 \to ?B_1, \dots, ?A_m \to ?B_m}(\mu)(x) = 
\begin{cases}
\mu(?A_i) & x = ?B_i \text{ for some } i \in \{1, \ldots, m\}\\
\mu(x) & \text{otherwise}
\end{cases}
\end{equation*}
We have all the necessary ingredients to define the semantics of graph
pattern expressions. As in \cite{PAG09}, we define this semantics as a
function $\sem{\,\cdot\,}$ that takes a graph pattern expression and
returns a set of mappings. For the sake of readability, the semantics
of filter expressions is presented separately.

The evaluation of a graph pattern $P$ over an RDF graph $G$, denoted
by $\sem{P}$, is defined recursively as follows.
\begin{enumerate}
\item If $P$ is $\egp$ and $G$ is nonempty, then $\sem{P} = \{
\mu_\emptyset \}$. If $P$ is $\egp$ and $G = \emptyset$, then $\sem{P}
= \emptyset$.

\item If $P$ is a triple pattern $t$, then
$\sem{P}=\{\mu \mid \dom(\mu)=\var(t)$ and $\mu(t)\in G\}$.

\item If $P$ is  $(P_1 \andp P_2)$, then $\sem{P} = \{\mu_1 \cup \mu_2
\mid \mu_1 \in \sem{P_1},\ \mu_2 \in \sem{P_2}$ and $\mu_1 \sim \mu_2\}$.

\item If $P$ is $(P_1 \opt P_2)$, then $\sem{P} = \{\mu_1 \cup \mu_2
\mid \mu_1 \in \sem{P_1},\ \mu_2 \in \sem{P_2}$ and $\mu_1 \sim
\mu_2\} \cup \{\mu \in \sem{P_1} \mid \text{ for every } \mu'\in
\sem{P_2}:  \mu \not\sim \mu'\}$. 

\item If $P$ is  $(P_1 \uni P_2)$, then $\sem{P} = \{\mu \mid \mu \in
\sem{P_1}$ or $\mu \in \sem{P_2}\}$.

\item If $P$ is  $(P_1 \minus P_2)$, then $\sem{P} = \{\mu \in \sem{P_1}
\mid \text{ for every } \mu'\in  \sem{P_2}:  \mu \not\sim \mu'$ or
$\dom(\mu) \cap \dom(\mu') = \emptyset\}$.    

\item If $P$ is $\sel{\{?A_1 \as ?B_1, \dots , ?A_m \as ?B_m, ?C_1,
\dots , ?C_n\}}{P_1}$, then:
\begin{eqnarray*} 
\sem{P} & = & \{ \ren{?A_1 \to ?B_1, \dots, ?A_m \to
?B_m}(\res{\mu}{\{?A_1, \dots, ?A_m, ?C_1, \dots, ?C_n\}}) 
\mid \mu \in \sem{P_1}\}.
\end{eqnarray*} 
\end{enumerate}
The semantics of filter expressions goes as follows.  Given a mapping
$\mu$ and a built-in condition $R$, we say that $\mu$ satisfies $R$,
denoted by $\mu \models R$, if:
\begin{enumerate}
\item $R$ is $\bound(?X)$ and $?X \in \dom(\mu)$;

\item $R$ is $?X = c$, $?X \in \dom(\mu)$ and $\mu(?X) = c$;

\item $R$ is $?X = ?Y$, $?X \in \dom(\mu)$, $?Y \in \dom(\mu)$ and $\mu(?X) =
\mu(?Y)$;

\item $R$ is $(\neg R_1)$, $R_1$ is a built-in condition, and it
is not the case that $\mu \models R_1$;

\item $R$ is $(R_1 \vee R_2)$,  $R_1$ and $R_2$ are built-in
conditions, and $\mu \models R_1$ or $\mu \models R_2$;

\item $R$ is $(R_1 \wedge R_2)$, $R_1$ and $R_2$ are built-in
conditions, $\mu \models R_1$ and $\mu \models R_2$.
\end{enumerate}
Then given an RDF graph $G$ and a filter expression $(P \vc R)$:
\begin{eqnarray*}
\sem{(P \vc R)} & = & \{ \mu \in \sem{P} \mid \mu \models R\}.
\end{eqnarray*}

\section{Proofs}


\subsection{Proof of Theorem \ref{theo-dm-ip}}
We show that $\DM$ is information preserving by providing a computable
mapping $\N : \rdf \to {\cal I}$ that satisfies the condition in
Definition
\ref{def-ip}. More precisely, given a relational schema $\R$, a set
$\Sigma$ of PKs and FKs and an instance $I$ of $\R$ satisfying
$\Sigma$, next we should how $\N(G)$ is defined for $\DM(\R, \Sigma, I) = G$.

\begin{itemize}
\item {\bf Step 1:} Identify all the ontological class triples (i.e
\Triple($r$, \rdftype, \owlClass)). The IRI $r$ identifies an
ontological class $R'$. For every $R'$ that was retrieved from
$G$, map it to a relation name $R$.

\item \begin{sloppypar} {\bf Step 2:} Identify all the datatype
triples of a given class (i.e \Triple($a$, \rdftype, \owlDTP),
\Triple($a$, \domain, $r_i$)). The IRI $a$ identifies the datatype
property $A'$ and the IRI $r$ identifies the ontological class
$R'$ that is the domain of $A'$. Every datatype property $A'$
with domain $R'$ is mapped to an attribute $A$ of relation name
$R$. \end{sloppypar}

\item {\bf Step 3:} For each class $R'$ and the datatype properties
$A_1' \ldots A_n'$ that have domain $R'$, we can recover the instances
of relation $R$ with the following SPARQL query:

\begin{multline*}
Q_1 \ = \  \SELECT \ \{?A_1, \ldots, ?A_n\} \ \bigg[ \cdots
\bigg(\bigg(\bigg((?X, \rdftype, r_i) \opt (?X, a_1, ?A_1)\bigg) \opt
(?X, a_2, ?A_2)\bigg)\\ \opt (?X, a_3, ?A_3)\bigg) \cdots \opt (?X_,
a_n, ?A_n)\bigg].  
\end{multline*}

\item \begin{sloppypar} {\bf Step 4:} Identify all the object property
triples (i.e. \Triple($r$, \rdftype, \owlOP)). The IRI $r$ that only has one element left of the $\#$ sign 
means that $r$
identifies the object property $R'$ in the
ontology that was originally mapped from a binary relation. This
object property $R'$ is mapped back to a binary relation name
$R$. The two elements following the $\#$ sign identify the attributes of the relation $R$.
From the triples
\Triple(r, \domain, s) and \Triple(r, \range, t), the IRI $s$
identifies the ontological class $S'$ which is mapped to the relation
$S$ and the IRI $t$ identifies the ontological class $T'$ which is
mapped to the relation $T$. Additionally,
the elements in the third and fourth position after the $\#$ identify the attributes which are being referenced from 
relations $S$ and $T$ respectively. For sake of simplicity, assume that the relation $R$ references the attribute $B$ of relation $S$ which is mapped to a datatype property $B'$ with domain $S'$ and IRI $b$. Additionally, the relation $R$ references the attribute $C$ of relation $T$, which is mapped to a datatype property $C'$
with domain $T'$ and IRI $c$. 

We can now recover the instances of the
relation $R$ with the following SPARQL query:
\begin{eqnarray*}
Q^\star & = & \sel{\{?A_1, ?A_2\}}{((?T_1, r, ?T_2) \andp (?T_1, b,
?A_1) \andp (?T_2, c, ?A_2))}.
\end{eqnarray*}
\end{sloppypar}

\item {\bf Step 5:} Given that the result of a SPARQL query is a set
$\Omega$ of solution mapping $\mu$, we need to translate each solution
mapping $\mu \in \Omega$ into a tuple $t$. We define a function
$\tr^{-1}$ as the inverse of function $\tr$, that is, for each
solution mapping $\mu$ and variable $?A$ in the domain of $\mu$,
$\tr^{-1}$ assigns the value of $\mu(?A)$ to $t.A$. 
Then the mapping function $\N$ over $G$ is defined as the following
relational instance. For every non-binary relation name identified in Steps 1, 2,
3, define $R^{\N(G)}$ as $\tr^{-1}(\semp{Q_1}{G})$, and for every binary
relation $R$ identified in Step 4, define $R^{\N(G)}$ as
$\tr^{-1}(\semp{Q_2}{G})$.
\end{itemize}
It is straightforward to prove that for every relational schema $\R$,
set $\Sigma$ of PKs and FKs and an instance $I$ of $\R$ satisfying
$\Sigma$, it holds that $\N(\M(\R, \Sigma, I)) = I$. This concludes
the proof of the theorem.


\subsection{Proof of Theorem \ref{theo-dm-qp}}

We need to prove that for every relational schema $\R$, set $\Sigma$
of PKs and FKs over $\R$ and relational algebra query $Q$ over $\R$,
there exists a SPARQL query $Q^\star$ such that for every instance $I$
of $\R$ including null values:
\begin{eqnarray*}
\tr(\semr{Q}) & = & \semp{Q^\star}{\DM(\R,\Sigma,I)}.
\end{eqnarray*}
In what follows, assume that $\R$ is a relational schema, $\Sigma$ is
a set of PKs and FKs over $\R$, and $I$ is an instance of $\R$
satisfying $\Sigma$. The following lemma is used in the proof of the
theorem.
\begin{lemma}\label{lem-sel}
Let $Q_1$ be a relational algebra query over $\R$ such that $\att(Q_1)
= \{A_1, \ldots, A_\ell\}$, and assume that $Q^\star_1$ is a SPARQL
graph pattern such that:
\begin{eqnarray*}
\tr(\semr{Q_1}) & = & \semp{Q_1^\star}{\DM(\R,\Sigma,I)}.
\end{eqnarray*}
Then we have that:
\begin{eqnarray*}
\tr(\semr{Q_1}) & = & \semp{\sel{\{?A_1, \ldots,
?A_\ell\}}{Q_1^\star}}{\DM(\R,\Sigma,I)}. 
\end{eqnarray*}
\end{lemma}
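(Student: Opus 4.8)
The plan is to show that the outermost $\SELECT$ in $\sel{\{?A_1, \ldots, ?A_\ell\}}{Q_1^\star}$ acts as the identity on the set of mappings $\semp{Q_1^\star}{\DM(\R,\Sigma,I)}$, so that the desired equality follows immediately from the hypothesis $\tr(\semr{Q_1}) = \semp{Q_1^\star}{\DM(\R,\Sigma,I)}$ by transitivity. Concretely, I would unfold the semantics of the $\SELECT$ operator (case 7 of the definition of $\sem{\,\cdot\,}$ in the appendix): since there are no renamings here (every listed variable is a bare $?A_i$, not an $?A_i \as ?B_i$), the renaming $\rho$ collapses to the identity, and the operator reduces to $\{ \res{\mu}{\{?A_1, \ldots, ?A_\ell\}} \mid \mu \in \semp{Q_1^\star}{\DM(\R,\Sigma,I)}\}$. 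So it suffices to prove that $\res{\mu}{\{?A_1, \ldots, ?A_\ell\}} = \mu$ for every $\mu \in \semp{Q_1^\star}{\DM(\R,\Sigma,I)}$, i.e.\ that $\dom(\mu) \subseteq \{?A_1, \ldots, ?A_\ell\}$.

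The key step is the domain bound $\dom(\mu) \subseteq \{?A_1, \ldots, ?A_\ell\}$, which I would establish from the hypothesis itself. By assumption $\semp{Q_1^\star}{\DM(\R,\Sigma,I)} = \tr(\semr{Q_1})$, and every mapping in the range of $\tr$ arising from $Q_1$ has domain contained in $\{?A \mid A \in \att(Q_1)\}$ by the very definition of $\tr$ (clause (1): the domain of $\tr(t)$ is $\{?A \mid A \in \att(Q_1) \text{ and } t.A \neq \nil\}$). Since $\att(Q_1) = \{A_1, \ldots, A_\ell\}$, we get $\dom(\mu) \subseteq \{?A_1, \ldots, ?A_\ell\}$ for every $\mu \in \semp{Q_1^\star}{\DM(\R,\Sigma,I)}$, as required. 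Hence $\res{\mu}{\{?A_1, \ldots, ?A_\ell\}} = \mu$ for all such $\mu$, the $\SELECT$ is the identity on this mapping set, and $\semp{\sel{\{?A_1, \ldots, ?A_\ell\}}{Q_1^\star}}{\DM(\R,\Sigma,I)} = \semp{Q_1^\star}{\DM(\R,\Sigma,I)} = \tr(\semr{Q_1})$.

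I expect the only subtle point — the ``main obstacle'', though it is mild — to be the bookkeeping around the non-parametricity side condition on $\SELECT$ patterns: one should check that wrapping $Q_1^\star$ in $\sel{\{?A_1, \ldots, ?A_\ell\}}{\cdot}$ yields a well-formed (non-parametric) graph pattern, so that its semantics is defined at all. This is immediate because all free variables of $Q_1^\star$ that survive are exactly the $?A_i$, which are retained, and no fresh $?B_i$ are introduced; so the side condition is vacuously met. Everything else is a direct unfolding of definitions, so I would keep the write-up short.
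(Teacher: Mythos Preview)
Your proposal is correct and follows essentially the same approach as the paper: both arguments hinge on the observation that, by the hypothesis $\semp{Q_1^\star}{\DM(\R,\Sigma,I)} = \tr(\semr{Q_1})$ together with the definition of $\tr$, every mapping $\mu$ in $\semp{Q_1^\star}{\DM(\R,\Sigma,I)}$ satisfies $\dom(\mu) \subseteq \{?A_1,\ldots,?A_\ell\}$, so the restriction in the $\SELECT$ is the identity. The paper presents this as two explicit inclusions while you phrase it as ``$\SELECT$ acts as the identity on this set,'' but the content is the same.
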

\begin{proof}
\begin{sloppypar}
First, we prove that $\tr(\semr{Q_1}) \subseteq \semp{\sel{\{?A_1,
\ldots, ?A_\ell\}}{Q_1^\star}}{\DM(\R,\Sigma,I)}$. Assume that $\mu
\in \tr(\semr{Q_1})$. Then there exists a tuple $t \in \semr{Q_1}$
such that $\tr(t) = \mu$. Thus, given that $\att(Q_1) = \{A_1, \ldots,
A_\ell\}$, we conclude that $\dom(\mu) \subseteq \{?A_1, \ldots,
?A_\ell\}$. Given that $\tr(\semr{Q_1}) =
\semp{Q_1^\star}{\DM(\R,\Sigma,I)}$, we have that $\mu \in
\semp{Q_1^\star}{\DM(\R,\Sigma,I)}$. Hence, from the fact that
$\dom(\mu) \subseteq \{?A_1, \ldots, ?A_\ell\}$, we conclude that $\mu
\in \semp{\sel{\{?A_1, \ldots,
?A_\ell\}}{Q_1^\star}}{\DM(\R,\Sigma,I)}$.

Second, we prove that $\semp{\sel{\{?A_1, \ldots,
?A_\ell\}}{Q_1^\star}}{\DM(\R,\Sigma,I)} \subseteq
\tr(\semr{Q_1})$. Assume that $\mu \in \semp{\sel{\{?A_1, \ldots,
?A_\ell\}}{Q_1^\star}}{\DM(\R,\Sigma,I)}$. Then there exists a mapping
$\mu' \in \semp{Q_1^\star}{\DM(\R,\Sigma,I)}$ such that $\mu =
\res{\mu'}{\{?A_1, \ldots, ?A_\ell\}}$. From the fact that 
$\tr(\semr{Q_1}) = \semp{Q_1^\star}{\DM(\R,\Sigma,I)}$, we conclude
that $\mu' \in \tr(\semr{Q_1})$. Thus, there exists a tuple $t \in
\semr{Q_1}$ such that $\tr(t) = \mu'$. But then given that $\att(Q_1)
= \{A_1, \ldots, A_\ell\}$, we conclude by definition of $\tr$ that
$\dom(\mu') \subseteq \{?A_1, \ldots, ?A_\ell\}$. Therefore, given
that $\mu = \res{\mu'}{\{?A_1, \ldots, ?A_\ell\}}$, we have that $\mu
= \mu'$ and, hence, $\mu \in \tr(\semr{Q_1})$ since $\mu' \in
\tr(\semr{Q_1})$. 
\end{sloppypar}
\end{proof}

\noindent
We now prove the theorem by induction on the structure of relational
algebra query $Q$.

\medskip

\noindent
{\bf Base Case:} For the sake of readability, we introduce a function
$\nu$ that retrieves the IRI for a given relation $R$, denoted by
$\nu(R)$, and the IRI for a given attribute $A$ in a relation $R$,
denoted by $\nu(A,R)$. In this part of the proof, we need to consider
three cases.
\begin{itemize}
\item {\bf Non-binary relations:} Assume that $Q$ is the identity
relational algebra query $R$, where $R$ is a non-binary relation
according to the definition given in Section \ref{sec-ro}. Moreover,
assume that $\att(R) = \{A_1, \dots, A_\ell\}$, with the corresponding
IRIs $\nu(R) = r, \nu(A_1,R) = a_1, \ldots, \nu(A_\ell,R) =
a_\ell$. Finally, let $Q^\star$ be the following SPARQL query:
\begin{multline*}
Q^\star \ = \  \SELECT \ \{?A_1, \ldots, ?A_\ell\} \ \bigg[ \cdots
\bigg(\bigg(\bigg((?X, \rdftype, r) \opt (?X, a_1, ?A_1)\bigg) \opt
(?X, a_2, ?A_2)\bigg)\\ \opt (?X, a_3, ?A_3)\bigg) \cdots \opt (?X,
a_\ell, ?A_\ell)\bigg].  
\end{multline*}
Next we prove that $\tr(\semr{Q}) = \semp{Q^\star}{\DM(\R,\Sigma,I)}$. 

First, we show that $\tr(\semr{Q}) \subseteq
\semp{Q^\star}{\DM(\R,\Sigma,I)}$. Assume that $\mu \in
\tr(\semr{Q})$. Then there exists a tuple $t \in \semr{Q}$ such that
$\tr(t) = \mu$ and, hence, $t \in R^I$. Without loss of generality,
assume that there exists $k \in \{0, \ldots, \ell\}$ such that (1)
$t.A_i \neq \nil$ for every $i \in \{1, \ldots k\}$, and (2) $t.A_j =
\nil$ for every $j \in \{k+1, \ldots, \ell\}$. By definition of $\tr$,
we have that $t.A_i = \mu(?A_i)$ for every $i \in \{1, \ldots, k\}$,
and that $\dom(\mu) = \{?A_1, \dots, ?A_k\}$. Given the definition of
$\DM$, we have that the following holds: $\Class(R)$ and $\DTP(A_i,
R)$ for every $i \in \{1, \ldots, \ell\}$.  Hence, given that $R$ is
not a binary relation (that is, $\IsBinRel(R)$ does not hold), we have
that the following triples are included in $\DM(\R,\Sigma,I)$:
\begin{itemize}
\item $(r_{{\it id}}, \rdftype, r)$, where $r_{{\it id}}$ is
the tuple id for the tuple $t$, and

\item $(r_{{\it id}}, a_i, v_i)$, where $i \in \{1, \ldots,
k\}$ and $v_i$ is the value of attribute $A_i$ in the tuple $t$, that
is, $t.A_i = v_i$.
\end{itemize}
Thus, given that no triple of the form $(r_{{\it id}}, a_j, v_j)$ is
included in $\DM(\R,\Sigma,I)$, for $j \in \{k+1, \ldots, \ell\}$, we
conclude that $\mu \in \semp{Q^\star}{\DM(\R,\Sigma,I)}$ by definition
of $Q^\star$ and the fact that $\mu = \tr(t)$.

\medskip

Second, we show that $\semp{Q^\star}{\DM(\R,\Sigma,I)} \subseteq
\tr(\semr{Q})$. Assume that $\mu \in
\semp{Q^\star}{\DM(\R,\Sigma,I)}$. Without loss of generality, assume
that $\dom(\mu) = \{?A_1, \dots, ?A_k\}$, where $0 \leq k \leq
\ell$. Then by definition of $Q^\star$, we have that there exists an IRI
$r_{{\it id}}$ such that $\DM(\R,\Sigma,I)$ contains triples $(r_{{\it
id}}, \rdftype, r)$ and $(r_{{\it id}}, a_i, \mu(?A_i))$, for every $i
\in \{1, \ldots, k\}$, and it does not contain a triple of the
form $(r_{{\it id}}, a_j, v_j)$, for every $j \in \{k+1, \ldots,
\ell\}$. Given the definition of $\DM(\R,\Sigma,I)$ and the fact that
$\IsBinRel(R)$ does not hold, we conclude that there exists a tuple $t
\in R^I$ such that: (1) the IRI assigned by $\DM$ to $t$ is $r_{{\it
id}}$, (2) $t.A_i = \mu(?A_i)$ for every $i \in \{1, \ldots, k\}$, and
(3) $t.A_j = \nil$ for every $j \in \{k+1, \ldots, \ell\}$. Thus,
given that $\tr(t) = \mu$ and $t \in R^I$, we conclude that $\mu \in
\tr(\semr{Q})$ (recall that $\semr{Q} = R^I$).


\item {\bf Binary relation:} Assume that $Q$ is the identity
relational algebra query $R$, where $R$ is a binary relation according
to the definition given in Section \ref{sec-ro}. Moreover, assume that
$\att(R) = \{A_1, A_2\}$, where $A_1$ is a foreign key referencing the
attribute $B$ of a relation $S$, and $A_2$ is a foreign key
referencing the attribute $C$ of a relation $T$. Finally, assume
that $\nu(R) = r$, $\nu(B,S) = b$  and $\nu(C,T) = c$, and define
$Q^\star$ as the following SPARQL 1.1 query:
\begin{eqnarray*}
Q^\star & = & \sel{\{?A_1, ?A_2\}}{((?T_1, r, ?T_2) \andp (?T_1, b,
?A_1) \andp (?T_2, c, ?A_2))}.
\end{eqnarray*}
Next we prove that $\tr(\semr{Q}) = \semp{Q^\star}{\DM(\R,\Sigma,I)}$. 

First, we show that $\tr(\semr{Q}) \subseteq
\semp{Q^\star}{\DM(\R,\Sigma,I)}$. Assume that $\mu \in
\tr(\semr{Q})$. Then there exists a tuple $t \in \semr{Q}$ such that
$\tr(t) = \mu$ and, hence, $t \in R^I$. Given the definition of
mapping $\DM$, we have that all the following hold: $\BinRel(R, A_1, A_2, S, B, T, C)$,
$\PK(A_1, A_2, R)$, $\FK_1(A_1, R, B, S)$, $\FK_1(A_2, R, C, T)$,
$\Class(S)$, $\DTP(B,S)$, $\Class(T)$, $\DTP(C,T)$, $\Rel(S)$,
$\Attr(B,S)$, $\Rel(T)$ and $\Attr(C,T)$. From this, we conclude that
there exist tuples $t_1 \in S^I$, $t_2 \in T^I$ such that $t.A_1 =
t_1.B \neq
\nil$ and $t.A_2 = t_2.C \neq \nil$, and we also conclude that the
following triples are included in $\DM(\R,\Sigma,I)$:
\begin{itemize}
\item $(s_{{\it id}}, r, t_{{\it id}})$ where $s_{{\it id}}$ is
the tuple id for tuple $t_1$ and $t_{{\it id}}$ is the tuple id for tuple
$t_2$, 

\item $(s_{{\it id}}, b, v_1)$, where $v_1$ is the value of attribute
$B$ in the tuple $t_1$, that is, $t_1.B = v_1$, 

\item $(t_{{\it id}}, c, v_2)$, where $v_2$ is the value of attribute
$C$ in the tuple $t_2$, that is, $t_2.C = v_2$.
\end{itemize}
Given that $t.A_1 = t_1.B = v_1$, $t.A_2 = t_2.C = v_2$ and $\tr(t) =
\mu$, we conclude by definition of $Q^\star$ that $\mu \in
\semp{Q^\star}{\DM(\R,\Sigma,I)}$.  

\medskip

Second, we show that $\semp{Q^\star}{\DM(\R,\Sigma,I)} \subseteq
\tr(\semr{Q})$. Assume that $\mu \in
\semp{Q^\star}{\DM(\R,\Sigma,I)}$, which implies that $\dom(\mu) = \{?A_1,
?A_2\}$. By definition of $Q^\star$, we have that there exist IRIs
$s_{{\it id}}$, $t_{{\it id}}$ such that the following triples are in
$\DM(\R,\Sigma,I)$: $(s_{{\it id}}, r, t_{{\it id}})$, $(s_{{\it id}},
b, \mu(?A_1))$ and $(t_{{\it id}}, c, \mu(?A_2))$. Hence, by
definition of $\DM$, we have that there exist tuples $t_1 \in S^I$,
$t_2 \in T^I$ such that: (1) $s_{{\it id}}$ is the IRI assigned to
$t_1$ by $\DM$, (2) $t_1.B = \mu(?A_1)$, (3) $t_{{\it id}}$ is the IRI
assigned to $t_2$ by $\DM$, and (4) $t_2.C = \mu(?A_2)$. Moreover, we
also have by definition of $\DM$ that the following holds: $\BinRel(R, A_1, A_2, S, B, T, C)$, 
$\FK_1(A_1, R, B, S)$ and $\FK_1(A_2, R, C, T)$. Hence, there
exists tuple $t \in R^I$ such that $t.A_1 = t_1.B = \mu(?A_1)$ and
$t.A_2 = t_2.C = \mu(?A_2)$. Therefore, given that $\mu = \tr(t)$
(since $\att(R) = \{A_1, A_2\}$ and $\dom(\mu) = \{?A_1, ?A_2\}$) and
$t \in \semr{Q}$ (since $\semr{Q} = R^I$), we conclude that $\mu \in
\tr(\semr{Q})$.

\item Third, assume that $Q = \nil_A$, and let $Q^\star$ be the
SPARQL query $\egp$. We have that $\semr{Q} = \{t\}$, where $t$ is a
tuple with domain $\{A\}$ such that $t.A = \nil$. Moreover, we have
that $\semp{Q^\star}{\DM(\R,\Sigma,I)} = \{\mu_\emptyset\}$ since
$\DM(\R,\Sigma,I)$ is a nonempty RDF graph. Thus,
given that $\tr(t) = \mu_\emptyset$, we conclude that $\tr(\semr{Q}) =
\semp{Q^\star}{\DM(\R,\Sigma,I)}$.
\end{itemize}

\medskip

\noindent
{\bf Inductive Step}: Assume that the theorem holds for relational
algebra queries $\Qone$ and $\Qtwo$. That is, there exists SPARQL
queries $\Qstarone$ and $\Qstartwo$ such that:
\begin{eqnarray*}
\tr(\semr{\Qone}) & = & \semp{\Qstarone}{\DM(\R,\Sigma,I)},\\  
\tr(\semr{\Qtwo}) & = & \semp{\Qstartwo}{\DM(\R,\Sigma,I)}.
\end{eqnarray*}
To continue with the proof, we need to consider the following
operators: selection ($\sigma$), projection ($\pi$), rename
($\delta$), join ($\nj$), union ($\cup$) and difference
($\smallsetminus$).

\begin{itemize}
\item {\bf Selection:} We need to consider four cases.

\begin{itemize}

\item {\bf Case 1}. Assume that $Q = \sigma_{A_1 = a}(\Qone)$, and
$Q^\star$ = $(\Qstarone \vc (?A_1 = a))$.
Next we prove that $\tr(\semr{Q})  =
\semp{Q^\star}{\DM(\R,\Sigma,I)}$. 

First, we show that $\tr(\semr{Q})
\subseteq \semp{Q^\star}{\DM(\R,\Sigma,I)}$. Assume that $\mu \in
\tr(\semr{Q})$. Then there exists a tuple $t \in \semr{Q}$ such that
$\tr(t) = \mu$. Thus, we have that $t \in \semr{\Qone}$ and $t.A_1 =
a$. By definition of $\tr$, we know that $t.A_1 = \mu(?A_1)$, from
which we conclude that $\mu(?A_1) = a$ given that $t.A_1 =
a$. Therefore, $\mu
\models (?A_1 = a)$, from which we conclude that $\mu \in
\semp{Q^\star}{\DM(\R,\Sigma,I)}$ since $\mu = \tr(t)$ and $\tr(t) \in
\semp{\Qstarone}{\DM(\R,\Sigma,I)}$ by induction hypothesis.

Second, we show that $\semp{Q^\star}{\DM(\R,\Sigma,I)} \subseteq
\tr(\semr{Q})$. Assume that  $\mu \in \semp{Q^\star}{\DM(\R,\Sigma,I)}$. Then 
$\mu \in \semp{\Qstarone}{\DM(\R,\Sigma,I)}$ and $\mu \models (?A_1 =
a)$, that is, $\mu(?A_1) = a$. By induction hypothesis, we have that
$\mu \in \tr(\semr{Q_1})$, and, hence, there exists a tuple $t \in
\semr{\Qone}$ such that $\tr(t) = \mu$. By definition of $\tr$, we
know that $t.A_1 = \mu(?A_1)$, from which we conclude that $t.A_1 = a$
given that $\mu(?A_1) = a$. Given that $t \in \semr{\Qone}$ and $t.A_1
= a$, we have that $t \in \semr{Q}$. Therefore, we conclude that $\mu
\in \tr(\semr{Q})$ since $\tr(t) = \mu$.
 
\item {\bf Case 2}. Assume that $Q = \sigma_{A_1 \neq a}(\Qone)$, and
$Q^\star$ = $(\Qstarone \vc (\neg(?A_1 = a) \wedge \bound(?A_1)))$.
Next we prove that $\tr(\semr{Q}) =
\semp{Q^\star}{\DM(\R,\Sigma,I)}$. 

First, we show that $\tr(\semr{Q})
\subseteq \semp{Q^\star}{\DM(\R,\Sigma,I)}$. Assume that $\mu \in
\tr(\semr{Q})$. Then there exists a tuple $t \in \semr{Q}$ such that
$\tr(t) = \mu$. Given that $t \in \semr{Q}$, we have by the definition
of the semantics of relational algebra that $t \in \semr{\Qone}$,
$t.A_1 \neq a$ and $t.A_1 \neq \nil$. Thus, by definition of $\tr$ we
have that $t.A_1 = \mu(?A_1)$ and $\mu(?A_1) \neq a$. Hence, we have
that $\mu \models (\neg(?A_1 = a) \wedge \bound(?A_1))$, from which we
conclude that $\mu \in \semp{Q^\star}{\DM(\R,\Sigma,I)}$ since $\mu =
\tr(t)$ and $\tr(t) \in \semp{\Qstarone}{\DM(\R,\Sigma,I)}$ by
induction hypothesis. 

Second, we show that $\semp{Q^\star}{\DM(\R,\Sigma,I)} \subseteq
\tr(\semr{Q})$. Assume that  $\mu \in \semp{Q^\star}{\DM(\R,\Sigma,I)}$. Then 
$\mu \in \semp{\Qstarone}{\DM(\R,\Sigma,I)}$ and $\mu \models
(\neg(?A_1 = a) \wedge \bound(?A_1))$, that is, $?A_1 \in \dom(\mu)$
and $\mu(?A_1) \neq a$.  By induction hypothesis we have that $\mu \in
\tr(\semr{Q_1})$ and, hence, there exists a tuple $t \in \semr{\Qone}$
such that $\tr(t) = \mu$. Given that $?A_1 \in \dom(\mu)$ and
$\mu(?A_1) \neq a$, it holds that $t.A_1 \neq \nil$ and $t.A_1 \neq
a$. Thus, we have that $t \in \semr{Q}$, from which we conclude that
$\mu \in \tr(\semr{Q})$ since $\mu = \tr(t)$.
 
\item {\bf Case 3}. Assume that $Q = \sigma_{\isn(A_1)}(\Qone)$, and
$Q^\star = (\Qstarone \vc(\neg\bound(?A_1)))$.  Next we prove that
$\tr(\semr{Q}) =
\semp{Q^\star}{\DM(\R,\Sigma,I)}$. 

First, we show that $\tr(\semr{Q})
\subseteq \semp{Q^\star}{\DM(\R,\Sigma,I)}$. Assume that $\mu \in
\tr(\semr{Q})$. Then there exists a tuple $t \in \semr{Q}$ such that
$\tr(t) = \mu$. Given that $t \in \semr{Q}$, we have that $t
\in \semr{\Qone}$ and $t.A_1 = \nil$. Thus, we conclude by definition
of $\tr$ that $?A_1 \not\in \dom(\mu)$ and, hence, $\mu \models \neg
\bound(?A_1)$. Therefore, we have that $\mu
\in \semp{Q^\star}{\DM(\R,\Sigma,I)}$ given that $\mu = \tr(t)$ and
$\tr(t) \in \semp{\Qstarone}{\DM(\R,\Sigma,I)}$ by induction
hypothesis.

Second, we show that $\semp{Q^\star}{\DM(\R,\Sigma,I)} \subseteq
\tr(\semr{Q})$. Assume that  $\mu \in \semp{Q^\star}{\DM(\R,\Sigma,I)}$. Then 
$\mu \in \semp{\Qstarone}{\DM(\R,\Sigma,I)}$ and $\mu \models
(\neg\bound(?A_1))$, that is, $?A_1 \not\in \dom(\mu)$.  By induction
hypothesis we have that $\mu \in \tr(\semr{Q_1})$, from which we
conclude that there exists a tuple $t \in \semr{\Qone}$ such that
$\tr(t) = \mu$. By definition of $\tr$ and given that $?A_1 \not\in
\dom(\mu)$, we have that $t.A_1 = \nil$ and, hence, $t \in
\semr{Q}$. Therefore, we conclude that $\mu \in \tr(\semr{Q})$ since
$\mu = \tr(t)$.

\item {\bf Case 4}. Assume that $Q = \sigma_{\isnn(A_1)}(\Qone)$, and
$Q^\star = (\Qstarone \vc(\bound(?A_1)))$.  Next we prove that
$\tr(\semr{Q}) =
\semp{Q^\star}{\DM(\R,\Sigma,I)}$. 

First, we show that $\tr(\semr{Q})
\subseteq \semp{Q^\star}{\DM(\R,\Sigma,I)}$. Assume that $\mu \in
\tr(\semr{Q})$. Then there exists a tuple $t \in \semr{Q}$ such that
$\tr(t) = \mu$. Given that $t \in \semr{Q}$, we have that $t \in
\semr{\Qone}$ and $t.A_1 \neq \nil$. Thus, by definition of $\tr$ we
have that $?A_1 \in \dom(\mu)$ and $\mu(?A_1) = t.A_1$ and, hence,
$\mu \models \bound(?A_1)$. Therefore, we conclude that $\mu \in
\semp{Q^\star}{\DM(\R,\Sigma,I)}$ given that $\mu = \tr(t)$ and
$\tr(t) \in \semp{\Qstarone}{\DM(\R,\Sigma,I)}$.

Second, we show that $\semp{Q^\star}{\DM(\R,\Sigma,I)} \subseteq
\tr(\semr{Q})$. Assume that  $\mu \in
\semp{Q^\star}{\DM(\R,\Sigma,I)}$. Then  $\mu \in
\semp{\Qstarone}{\DM(\R,\Sigma,I)}$ and  $\mu \models \bound(?A_1)$,
that is, $?A_1 \in \dom(\mu)$. By induction hypothesis we have that
there exists a tuple $t \in \semr{\Qone}$ such that $\tr(t) =
\mu$. Thus, by definition of $\tr$ we have that $t.A_1 = \mu(?A_1)$,
which implies that $t.A_1 \neq \nil$. Therefore, we have that $t \in
\semr{Q}$ and, hence, $\mu \in \tr(\semr{Q})$ since $\mu =\tr(t)$.
\end{itemize}


\item {\bf Projection:} Assume that $Q = \pi_{\{A_1,  \dots,
A_\ell\}}(\Qone)$, and $Q^\star = \sel{\{?A_1, \ldots,
?A_\ell\}}{\Qstarone}$. Next we prove that $\tr(\semr{Q}) =
\semp{Q^\star}{\DM(\R,\Sigma,I)}$. 

First, we show that $\tr(\semr{Q}) \subseteq
\semp{Q^\star}{\DM(\R,\Sigma,I)}$. Assume that $\mu \in
\tr(\semr{Q})$. Then there exists a tuple $t \in \semr{Q}$ such that
$\tr(t) = \mu$. Given that $t \in \semr{Q}$, there exists a tuple $t'
\in \semr{\Qone}$ such that for every $A \in \att(Q): t.A =
t'.A$. Without loss of generality, assume that: (1) $\att(Q) = \{A_1,
\dots, A_k, A_{k+1},
\dots , A_\ell\}$, (2) $t.A_i \neq \nil$ for every $i \in \{1, \ldots,
k\}$, and (3) $t.A_j = \nil$ for every $j \in \{k+1, \ldots,
\ell\}$. By definition of $\tr$, we have that $t.A_i = \mu(?A_i)$ for
every $i \in \{1, \ldots, k\}$, and that $\dom(\mu) = \{?A_1, \dots,
?A_k\}$. Given that $t' \in \semr{\Qone}$, we have for $\mu' =
\tr(t')$ that: (1) $\mu' \in \tr(\semr{\Qone})$, (2) $\dom(\mu)
\subseteq \dom(\mu')$, (3) $\dom(\mu) = (\{?A_1, \ldots, ?A_\ell\}
\cap \dom(\mu'))$, and (4) $t.A_i = t'.A_i  = \mu(?A_i) = \mu'(?A_i)$
for every $i \in \{1, \ldots, k\}$. Thus, we have in particular that: 
\begin{equation}
\label{eq-res}
\tag{\dag}
\mu \ = \ \res{\mu'}{\{?A_1, \ldots, ?A_\ell\}}.
\end{equation}
By induction hypothesis we have that $\mu' \in
\semp{\Qstarone}{\DM(\R,\Sigma,I)}$, from which we conclude that
$\res{\mu'}{\{?A_1, \ldots, ?A_\ell\}} \in
\semp{Q^\star}{\DM(\R,\Sigma,I)}$. Thus, we conclude from
\eqref{eq-res} that $\mu \in \semp{Q^\star}{\DM(\R,\Sigma,I)}$.

Second, we show that $\semp{Q^\star}{\DM(\R,\Sigma,I)} \subseteq
\tr(\semr{Q})$. Assume that $\mu \in
\semp{Q^\star}{\DM(\R,\Sigma,I)}$. Then there exists a mapping $\mu' \in
\semp{\Qstarone}{\DM(\R,\Sigma,I)}$ such that $\mu = 
\res{\mu'}{\{?A_1, \ldots, ?A_\ell\}}$. By induction hypothesis, we
have that $\mu' \in \tr(\semr{\Qone})$, from we conclude that there
exists a tuple $t' \in \semr{\Qone}$ such that $\tr(t') =\mu'$. Let
$t$ be a tuple with domain $\{A_1, \ldots, A_\ell\}$ such that $t.A_i
= t'.A_i$ for every $i \in \{1, \ldots,
\ell\}$. Then, given that $t' \in \semr{\Qone}$, we have that $t \in
\semr{Q}$, and given that $\mu' = \tr(t')$ and $\mu =
\res{\mu'}{\{?A_1, \ldots, ?A_\ell\}}$, we have that $\mu =
\tr(t)$. Therefore, we conclude that $\mu \in \tr(\semr{Q})$. 


\item {\bf Rename:} Assume that $\att(Q) = \{A_1, \ldots, A_\ell\}$
and $Q = \delta_{A_1 \to B_1}(\Qone)$, and let $Q^\star =
\sel{\{?A_1 \as ?B_1, ?A_2, \dots , ?A_\ell\}}{\Qstarone}$. Next we
prove that $\tr(\semr{Q}) = \semp{Q^\star}{\DM(\R,\Sigma,I)}$.

First, we show that $\tr(\semr{Q}) \subseteq
\semp{Q^\star}{\DM(\R,\Sigma,I)}$. Assume that $\mu \in
\tr(\semr{Q})$. Then there exists a tuple $t \in \semr{Q}$ such that
$\tr(t) = \mu$. Given that $t \in \semr{Q}$, there exists a tuple $t'
\in \semr{\Qone}$ such that $t.B_1 = t'.A_1$ and $t.A_i = t'.A_i$ for
every $i \in \{2, \ldots, \ell\}$. Without loss of generality, assume
that there exists $k \in \{1, \ldots, \ell\}$ such that: (1) $t.A_i
\neq \nil$ for every $i \in \{2, \ldots, k\}$, and (2) $t.A_j = \nil$
for every $j \in \{k+1, \ldots, \ell\}$. To finish the proof, we
consider two cases.
\begin{itemize}
\item Assume that $t.B_1 \neq \nil$. Then it follows from conditions
(1), (2) and definition of $\tr$ that $\mu(?A_1) = t.B_1 = t'.A_1$,
$\mu(?A_i) = t.A_i = t'.A_i$ for every $i \in \{2, \ldots, k\}$ and
$\dom(\mu) = \{?A_1, ?A_2, \dots, ?A_k\}$. Let $\mu' = \tr(t')$. Then
by definition of $\tr$, we have that $\ren{?A_1 \to ?B_1}(\mu') =
\mu$. Moreover, given that $\mu' = \tr(t')$ and $t' \in \semr{Q_1}$,
we conclude that $\mu' \in \tr(\semr{Q_1})$ and, hence, $\mu' \in
\semp{Q_1^\star}{\DM(\R,\Sigma,I)}$ by induction hypothesis. Thus, we
have that $\ren{?A_1 \to 
?B_1}(\res{\mu'}{\{?A_1, \dots, ?A_\ell\}}) \in
\semp{Q^\star}{\DM(\R,\Sigma,I)}$, from which we conclude that $\mu
\in \semp{Q^\star}{\DM(\R,\Sigma,I)}$ since $\res{\mu'}{\{A_1, \ldots,
A_\ell\}} = \mu'$ and $\ren{?A_1 \to ?B_1}(\mu') = \mu$.

\item Assume that $t.B_1 = \nil$. Then it follows from conditions
(1), (2) and definition of $\tr$ that $\mu(?A_i) = t.A_i = t'.A_i$ for
every $i \in \{2,
\ldots, k\}$ and $\dom(\mu) = \{?A_2, ?A_2, \dots, ?A_k\}$. Let $\mu'
= \tr(t')$. Then by definition of $\tr$, we have that $\ren{?A_1 \to
?B_1}(\mu') = \mu$. Moreover, given that $\mu' = \tr(t')$ and $t' \in
\semr{Q_1}$, we conclude that $\mu' \in \tr(\semr{Q_1})$ and, hence,
$\mu' \in \semp{Q_1^\star}{\DM(\R,\Sigma,I)}$ by induction
hypothesis. Thus, we have that $\ren{?A_1 \to ?B_1}(\res{\mu'}{\{?A_1,
\dots, ?A_\ell\}}) \in
\semp{Q^\star}{\DM(\R,\Sigma,I)}$, from which we conclude that $\mu
\in \semp{Q^\star}{\DM(\R,\Sigma,I)}$ since $\res{\mu'}{\{A_1, \ldots,
A_\ell\}} = \mu'$ and $\ren{?A_1 \to ?B_1}(\mu') = \mu$.
\end{itemize}

Second, we show that $\semp{Q^\star}{\DM(\R,\Sigma,I)} \subseteq
\tr(\semr{Q})$. Assume that $\mu \in
\semp{Q^\star}{\DM(\R,\Sigma,I)}$. Then there exists a mapping $\mu'
\in \semp{\Qstarone}{\DM(\R,\Sigma,I)}$ such that $\mu = \ren{?A_1 \to
?B_1}(\res{\mu'}{\{?A_1, \dots, ?A_\ell\}}) $. By induction
hypothesis, we have that $\mu' \in \tr(\semr{\Qone})$, from which we
conclude that there exists a tuple $t' \in \semr{\Qone}$ such that
$\tr(t') = \mu'$. Let $t$ be a tuple with domain $\{B_1, A_2, \ldots,
A_\ell\}$ such that $t.B_1 = t'.A_1$ and $t.A_i = t'.A_i$ for every $i
\in \{2, \ldots, \ell\}$. Then we have that $t \in \semr{Q}$. Given
that $\mu' = \tr(t')$ and $\mu = \ren{?A_1 \to
?B_1}(\res{\mu'}{\{?A_1, \dots, ?A_\ell\}}) $, we have that $\mu =
\tr(t)$. Therefore, we conclude that $\mu \in \tr(\semr{Q})$. 


\item {\bf Join:} Assume that $Q = (Q_1 \nj Q_2)$, where $(\att(Q_1)
\cap \att(Q_2)) = \{A_1, \ldots, A_\ell\}$, and let 
\begin{multline*}
Q^\star \ = \ \bigg(\bigg(Q_1^\star \vc (\bound(?A_1) \wedge \cdots
\wedge \bound(?A_\ell))\bigg) \ \andp \bigg(Q_2^\star \vc (\bound(?A_1)
\wedge \cdots \wedge \bound(?A_\ell))\bigg)\bigg).
\end{multline*}
Next we prove that $\tr(\semr{Q}) =
\semp{Q^\star}{\DM(\R,\Sigma,I)}$. 

First, we show that $\tr(\semr{Q}) \subseteq
\semp{Q^\star}{\DM(\R,\Sigma,I)}$. Assume that $\mu \in
\tr(\semr{Q})$. Then there exists a tuple $t$ such that $\mu = \tr(t)$
and $t \in \semr{Q}$. Thus, we have that there exist tuples $t_1 \in
\semr{Q_1}$ and $t_2 \in \semr{Q_2}$ such that: (1) $t.A_i = t_1.A_i =
t_2.A_i \neq \nil$ for every $i \in \{1, \ldots, \ell\}$, (2) $t.A =
t_1.A$ for every $A \in (\att(Q_1) \smallsetminus \att(Q_2))$, and (3)
$t.A = t_2.A$ for every $A \in (\att(Q_2) \smallsetminus
\att(Q_1))$. Let $\mu_1 = \tr(t_1)$ and $\mu_2 = \tr(t_2)$. By
induction hypothesis and given that $\mu_1 \in \tr(\semr{Q_1})$
and $\mu_2 \in \tr(\semr{Q_2})$, we have that $\mu_1 \in
\semp{Q_1^\star}{\DM(\R,\Sigma,I)}$ and $\mu_2 \in
\semp{Q_2^\star}{\DM(\R,\Sigma,I)}$. Hence, from condition (1) and
definition of $\tr$, we conclude that: 
\begin{eqnarray*}
\mu_1 & \in & \semp{(Q_1^\star \vc (\bound(?A_1) \wedge \cdots \wedge
\bound(?A_\ell)))}{\DM(\R,\Sigma,I)},\\
\mu_2 & \in & \semp{(Q_2^\star \vc (\bound(?A_1) \wedge \cdots \wedge
\bound(?A_\ell)))}{\DM(\R,\Sigma,I)}.
\end{eqnarray*}
Thus, given that $\mu = \mu_1 \cup \mu_2$ by conditions (1), (2), (3)
and definition of $\tr$, we conclude that $\mu \in
\semp{Q^\star}{\DM(\R,\Sigma,I)}$.

\begin{sloppypar}
Second, we show that $\semp{Q^\star}{\DM(\R,\Sigma,I)} \subseteq
\tr(\semr{Q})$.  Assume that $\mu \in
\semp{Q^\star}{\DM(\R,\Sigma,I)}$. Then there exist mappings $\mu_1$,
$\mu_2$ such that: (1) $\mu = \mu_1 \cup \mu_2$, (2) $\mu_1 \in
\semp{(Q_1^\star \vc (\bound(?A_1) \wedge \cdots$ $\cdots \wedge
\bound(?A_\ell)))}{\DM(\R,\Sigma,I)}$, and (3) $\mu_2 \in
\semp{(Q_2^\star \vc (\bound(?A_1) \wedge \cdots \wedge
\bound(?A_\ell)))}{\DM(\R,\Sigma,I)}$. By induction hypothesis, we
have that $\mu_1 \in \tr(\semr{Q_1})$ and $\mu_2 \in
\tr(\semr{Q_2})$. Thus, there exist tuples $t_1 \in \semr{Q_1}$, $t_2
\in \semr{Q_2}$ such that $\mu_1 = \tr(t_1)$ and $\mu_2 =
\tr(t_2)$. From conditions (1), (2), (3) and definition of $\tr$, we
have that $t_1.A_i = t_2.A_i = \mu(?A_i) \neq \nil$ for every $i \in
\{1, \ldots, \ell\}$. Thus, given that $(\att(Q_1) \cap \att(Q_2)) =
\{A_1, \ldots, A_\ell\}$, we have that $t \in \semr{Q}$, where $t :
(\att(Q_1)
\cup \att(Q_2)) \to (\D \cup \{\nil\})$ such that: (4) $t.A_i =
t_1.A_i = t_2.A_i$ for every $i \in \{1, \ldots, \ell\}$, (5) $t.A =
t_1.A$ for every $A \in (\att(Q_1) \smallsetminus \att(Q_2))$, and (6)
$t.A = t_2.A$ for every $A \in (\att(Q_2) \smallsetminus
\att(Q_1))$. Hence, we conclude that $\mu \in \tr(\semr{Q})$, given
that $\mu = \tr(t)$ by definition of $t$, definition of $\tr$ and
conditions (1), (2) and (3).
\end{sloppypar}


\item {\bf Union:} Assume that $Q = (Q_1 \cup Q_2)$ and
$Q^\star = (Q_1^\star \uni Q_2^\star)$. 
Next we prove that $\tr(\semr{Q}) =
\semp{Q^\star}{\DM(\R,\Sigma,I)}$. 

First, we show that $\tr(\semr{Q})
\subseteq \semp{Q^\star}{\DM(\R,\Sigma,I)}$. Assume that $\mu \in
\tr(\semr{Q})$. Then there exists a tuple $t \in \semr{Q}$ such that
$\tr(t) = \mu$. Thus, we have that $t \in \semr{Q_1}$ or $t \in
\semr{Q_2}$. Without loss of generality, assume that $t \in
\semr{Q_1}$. Then we have that $\tr(t) \in \tr(\semr{Q_1})$ and, hence,
$\tr(t) \in \semp{Q_1^\star}{\DM(\R,\Sigma,I)}$ by induction
hypothesis. Therefore, $\mu \in \semp{Q_1^\star}{\DM(\R,\Sigma,I)}$
since $\tr(t) = \mu$, from which we conclude that $\mu \in
\semp{Q^\star}{\DM(\R,\Sigma,I)}$. 

Second, we show that $\semp{Q^\star}{\DM(\R,\Sigma,I)} \subseteq
\tr(\semr{Q})$. Assume that  $\mu \in \semp{Q^\star}{\DM(\R,\Sigma,I)}$. Then 
$\mu \in \semp{Q_1^\star}{\DM(\R,\Sigma,I)}$ or $\mu \in
\semp{Q_2^\star}{\DM(\R,\Sigma,I)}$. Without loss of generality,
assume that $\mu \in \semp{Q_1^\star}{\DM(\R,\Sigma,I)}$. Then, by induction
hypothesis, we have that $\mu \in \tr(\semr{Q_1})$, and, hence, there
exists a tuple $t \in \semr{Q_1}$ such that $\tr(t) = \mu$. Therefore,
we conclude that $t \in \semr{(Q_1 \cup Q_2)}$, from which we deduce
that $\mu \in \tr(\semr{Q})$.

\item {\bf Difference:} Assume that $Q = (Q_1 \smallsetminus Q_2)$,
and that $\att(Q_1) = \att(Q_2) = \{A_1, \ldots, A_\ell\}$. Then for
every (not necessarily nonempty) set $\X = \{i_1, i_2, \ldots, i_p\}$
such that $1 \leq i_1 < i_2 < \ldots < i_p \leq \ell$, define $R_\X$
as the following filter condition:
\begin{multline*}
\bigg(\bound(?A_{i_1}) \wedge \bound(?A_{i_2}) \wedge \cdots \wedge
\bound(?A_{i_p}) \ \wedge\\ \neg \bound (?A_{j_1}) \wedge \neg \bound
(?A_{j_2}) \wedge \cdots \wedge \neg \bound (?A_{j_q})\bigg),
\end{multline*}
where $1 \leq j_1 < j_2 < \cdots < j_q \leq \ell$ and $\{j_1, j_2,
\cdots, j_q\} = (\{1, \ldots, \ell\} \smallsetminus \{i_1, i_2,
\ldots, i_p\})$. That is, condition $R_\X$ indicates that every
variables $?A_i$ with $i \in \X$ is bound, while every variable $?A_j$
with $j \in (\{1, \ldots, \ell\} \smallsetminus \X)$ is not
bound. Moreover, for every $\X \neq \emptyset$ define SPARQL graph
pattern $P_\X$ as follows: 
\begin{eqnarray*}
P_\X & = & ((Q^\star_1 \vc R_\X) \minus (Q^\star_2 \vc R_\X)).
\end{eqnarray*} 
Notice that there are $2^\ell-1$ possible graph patterns $P_\X$ with
$\X \neq \emptyset$. Let $P_1$, $P_2$, $\ldots$, $P_{2^\ell-1}$ be an
enumeration of these graph patterns. Moreover, assuming that $?X$,
$?Y$, $?Z$ are fresh variables, let $P_\emptyset$ be the following
query: 
\begin{eqnarray*}
&\bigg[\bigg[\bigg(Q^\star_1 \vc R_\emptyset\bigg) \opt
\bigg(\bigg(Q^\star_2 \vc R_\emptyset\bigg) \andp
(?X,?Y,?Z)\bigg)\bigg] \vc (\neg \bound(?X))\bigg].&
\end{eqnarray*}
Then graph pattern $Q^\star$ is defined as follows:
\begin{eqnarray*}
Q^\star & = & (P_1 \uni P_2 \uni \cdots \uni P_{2^\ell-1} \uni
P_\emptyset).  
\end{eqnarray*}
Next we show that $\tr(\semr{Q}) =
\semp{Q^\star}{\DM(\R,\Sigma,I)}$. In this proof, we assume, by
considering Lemma \ref{lem-sel}, that for every mapping $\mu$ such
that $\mu \in \semp{Q_1^\star}{\DM(\R,\Sigma,I)}$ or $\mu \in
\semp{Q_2^\star}{\DM(\R,\Sigma,I)}$, it holds that $\dom(\mu)
\subseteq \{?A_1, \ldots, ?A_\ell\}$. 

First, we show that $\tr(\semr{Q})
\subseteq \semp{Q^\star}{\DM(\R,\Sigma,I)}$.  Assume that $\mu
\in \tr(\semr{Q})$. Then there exists a tuple $t \in \semr{Q}$ such
that $\tr(t) = \mu$. Thus, we have that $t \in \semr{Q_1}$ and $t
\not\in \semr{Q_2}$, from which we conclude by considering the
induction hypothesis that $\mu \in \semp{Q_1^\star}{\DM(\R,\Sigma,I)}$
and $\mu \not\in \semp{Q_2^\star}{\DM(\R,\Sigma,I)}$. We consider two
cases to show that this implies that $\mu \in
\semp{Q^\star}{\DM(\R,\Sigma,I)}$. 
\begin{itemize}
\begin{sloppypar}
\item Assume that $\dom(\mu) \neq \emptyset$, and let $\X = \{ i
\in \{1, \ldots, \ell\} \mid ?A_i \in \dom(\mu)\}$. Given that $\mu
\in \semp{Q_1^\star}{\DM(\R,\Sigma,I)}$, we have that $\dom(\mu)
\subseteq \{?A_1, \ldots, ?A_\ell\}$ and, hence, $\X \neq \emptyset$. 
Furthermore, we have that $\mu \models R_\X$ and, hence, $\mu \in
\semp{(Q_1^\star \vc R_\X)}{\DM(\R,\Sigma,I)}$. From this and the fact
that $\mu \not\in \semp{Q_2^\star}{\DM(\R,\Sigma,I)}$, we conclude
that: 
\begin{equation}
\label{eq-minus}
\tag{\ddag}
\mu \ \in \ \semp{((Q_1^\star \vc R_\X) \minus (Q_2^\star \vc
R_\X))}{\DM(\R,\Sigma,I)}. 
\end{equation}
To see why this is the case, assume that \eqref{eq-minus} does not
hold. Then given that $\mu \in \semp{(Q_1^\star \vc
R_\X)}{\DM(\R,\Sigma,I)}$, we conclude by definition of the operator
$\MINUS$ that there exists a mapping $\mu' \in \semp{(Q_2^\star \vc
R_\X)}{\DM(\R,\Sigma,I)}$ such that $\mu \sim \mu'$ and $(\dom(\mu) \cap
\dom(\mu')) \neq \emptyset$. Given that $\mu' \in
\semp{Q^\star_2}{\DM(\R,\Sigma,I)}$,  we have that $\dom(\mu')
\subseteq \{?A_1, \ldots, ?A_\ell\}$. Thus, given that $\mu' \models
R_\X$ and $\dom(\mu) \subseteq \{?A_1, \ldots, ?A_\ell\}$, we conclude
that $\dom(\mu) =
\dom(\mu')$. Therefore, given that $\mu \sim \mu'$, we have that $\mu =
\mu'$, from which we conclude that $\mu \in
\semp{Q_2^\star}{\DM(\R,\Sigma,I)}$, leading to a contradiction.

{From} \eqref{eq-minus} and definition of $Q^\star$, we conclude that
$\mu \in \semp{Q^\star}{\DM(\R,\Sigma,I)}$ since $((Q_1^\star \vc
R_\X) \minus (Q_2^\star \vc R_\X)) = P_i$ for some $i \in \{1, \ldots,
2^{\ell}-1\}$ (recall that $\X \neq \emptyset$).

\item Assume that $\dom(\mu) = \emptyset$. Then we have that $\mu
\models R_\emptyset$ and, hence, $\mu \in \semp{(Q_1^\star \vc
R_\emptyset)}{\DM(\R,\Sigma,I)}$. From this and the fact 
that $\mu \not\in \semp{Q_2^\star}{\DM(\R,\Sigma,I)}$, we conclude
that:
\begin{align}
\notag
\mu \ \in \ \semp{\bigg[\bigg[&\bigg(Q^\star_1 \vc R_\emptyset\bigg) \opt\\
\label{eq-minus-empty}
\tag{$*$}
&\bigg(\bigg(Q^\star_2 \vc R_\emptyset\bigg) \andp
(?X,?Y,?Z)\bigg)\bigg] \vc (\neg \bound(?X))\bigg]}{\DM(\R,\Sigma,I)}.
\end{align}
To see why this is the case, assume that \eqref{eq-minus-empty} does
not hold. Then given that $\mu \in \semp{(Q_1^\star \vc
R_\emptyset)}{\DM(\R,\Sigma,I)}$ and $\dom(\mu) = \emptyset$, we have
that there exists a mapping $\mu' \in \semp{((Q_2^\star \vc
R_\emptyset) \andp (?X,?Y,?Z))}{\DM(\R,\Sigma,I)}$ such that $?X \in
\dom(\mu')$. Thus, there exist mappings $\mu_1 \in \semp{(Q_2^\star
\vc R_\emptyset)}{\DM(\R,\Sigma,I)}$ and $\mu_2 \in
\semp{(?X,?Y,?Z)}{\DM(\R,\Sigma,I)}$ such that $\mu' = \mu_1 \cup
\mu_2$. Given that  
$\mu_1 \in \semp{(Q_2^\star \vc R_\emptyset)}{\DM(\R,\Sigma,I)}$, we
have that $\mu_1 \in \semp{Q_2^\star}{\DM(\R,\Sigma,I)}$ and $\mu_1
\models R_\emptyset$. Thus, we have that $\dom(\mu_1) \subseteq
\{?A_1, \ldots, ?A_\ell\}$, from which we conclude that $\dom(\mu_1) =
\emptyset$ (since $\mu_1 \models R_\emptyset$). Therefore, we have
that $\mu = \mu_1$, which implies that $\mu \in
\semp{Q_2^\star}{\DM(\R,\Sigma,I)}$ and leads to a contradiction.

{From} \eqref{eq-minus-empty} and definition of $Q^\star$, we conclude
that $\mu \in \semp{Q^\star}{\DM(\R,\Sigma,I)}$.
\end{sloppypar}
\end{itemize}

Second, we show that $\semp{Q^\star}{\DM(\R,\Sigma,I)} \subseteq
\tr(\semr{Q})$.  Assume that $\mu \in
\semp{Q^\star}{\DM(\R,\Sigma,I)}$. Then we consider two cases to prove
that $\mu \in \tr(\semr{Q})$.
\begin{itemize}
\item Assume that there exists $i \in \{1, \ldots, \ell\}$ such that
$\mu \in \semp{P_i}{\DM(\R,\Sigma,I)}$. Then there exists $\X \neq
\emptyset$ such that $\mu \in \semp{((Q_1^\star \vc R_\X) \minus
(Q_2^\star \vc R_\X))}{\DM(\R,\Sigma,I)}$. Thus, we have that $\mu \in
\semp{Q_1}{\DM(\R,\Sigma,I)}$ and $\mu \models R_\X$, from which we
conclude that $\emptyset \subsetneq \dom(\mu) \subseteq \{?A_1, \ldots,
?A_\ell\}$. From this fact and definition of the $\MINUS$
operator, we obtain that $\mu \not\in
\semp{Q_2^\star}{\DM(\R,\Sigma,I)}$. Hence, by induction hypothesis,
we conclude that $\mu \in \tr(\semr{Q_1})$ and $\mu \not\in
\tr(\semr{Q_2})$. That is, there exists a tuple $t$ such that $\tr(t)
= \mu$, $t \in \semr{Q_1}$ and $t \not\in \semr{Q_2}$, from which we
conclude that $\mu \in \tr(\semr{Q})$.

\item \begin{sloppypar}
Assume that \eqref{eq-minus-empty} holds. First we show that
$\semp{(Q^\star_2 \vc R_\emptyset)}{\DM(\R,\Sigma,I)} =
\emptyset$. For the sake of contradiction, assume that there exists a
mapping $\mu' \in \semp{(Q^\star_2 \vc
R_\emptyset)}{\DM(\R,\Sigma,I)}$. Then given that $\mu' \in
\semp{Q^\star_2}{\DM(\R,\Sigma,I)}$ and $\mu \models R_\emptyset$, we
conclude that $\dom(\mu') = \emptyset$. Given that $\DM(\R,\Sigma,I)$
is a nonempty RDF graph and $\dom(\mu') = \emptyset$, we conclude that
there exists a mapping $\mu'' \in \semp{((Q^\star_2 \vc R_\emptyset)
\andp (?X,?Y,?Z))}{\DM(\R,\Sigma,I)}$ such that $\dom(\mu'') = \{?X,
?Y, ?Z\}$. Thus, given that variables $?X$, $?Y$, $?Z$ are not
mentioned in $(Q^\star_1 \vc R_\emptyset)$, we conclude that $\mu''$
is compatible with every mapping in $\semp{(Q^\star_1 \vc
R_\emptyset)}{\DM(\R,\Sigma,I)}$. Thus, by definition of the $\OPT$
operator, we conclude that $?X$ belongs to the domain of every mapping
in $\semp{((Q^\star_1 \vc R_\emptyset) \opt ((Q^\star_2 \vc
R_\emptyset) \andp (?X,?Y,?Z)))}{\DM(\R,\Sigma,I)}$, which implies
that $\llbracket(((Q^\star_1 \vc R_\emptyset) \opt ((Q^\star_2 \vc
R_\emptyset) \andp (?X,?Y,?Z)))$ $\vc (\neg
\bound(?X)))\rrbracket_{\DM(\R,\Sigma,I)} = \emptyset$. But this leads to a
contradiction, as we assume that \eqref{eq-minus-empty} holds.

Given that \eqref{eq-minus-empty} holds and  $\semp{(Q^\star_2 \vc
R_\emptyset)}{\DM(\R,\Sigma,I)} = \emptyset$, we conclude that $\mu
\in \semp{(Q^\star_1 \vc R_\emptyset)}{\DM(\R,\Sigma,I)}$ and 
$\mu \not\in \semp{(Q_2^\star \vc
R_\emptyset)}{\DM(\R,\Sigma,I)}$. Hence, we have that $\mu \in
\semp{Q^\star_1}{\DM(\R,\Sigma,I)}$ and $\mu \not\in
\semp{Q^\star_2}{\DM(\R,\Sigma,I)}$ and, therefore, we conclude by
induction hypothesis that $\mu \in \tr(\semr{Q_1})$ and $\mu \not\in 
\tr(\semr{Q_2})$. That is, there exists a tuple $t$ such that $\tr(t)
= \mu$, $t \in \semr{Q_1}$ and $t \not\in \semr{Q_2}$, from which we
conclude that $\mu \in \tr(\semr{Q})$.
\end{sloppypar}
\end{itemize}
\end{itemize}

\subsection{Proof of Proposition \ref{prop-dm-sp}}
Assume that we have a relational schema containing a relation with
name {\tt STUDENT} and attributes {\tt SID}, {\tt NAME}, and assume
that the attribute {\tt SID} is the primary key.  Moreover, assume
that this relation has two tuples, $t_1$ and $t_2$ such that
$t_1.\text{\tt SID}= \text{1}$, $t_1.\text{\tt NAME} = \text{John}$
and $t_2.\text{\tt SID} = \text{1}$, $t_2.\text{\tt NAME} =
\text{Peter}$. It is clear that the primary key is violated, therefore
the database is inconsistent. If $\DM$ would be semantics preserving,
then the resulting RDF graph would be inconsistent under OWL
semantics. However, the result of applying $\DM$, returns the
following consistent RDF graph (assuming given a base IRI {\tt
"http://example.edu/db/"} for the mapping):

\vspace{-10pt}

{\small
\begin{align*}
\Triple(&\text{\tt "http://example.edu/db/STUDENT"}, \rdftype, \owlClass)  \\
\Triple(&\text{\tt "http://example.edu/db/STUDENT\#NAME"}, \rdftype, \owlDTP)  \\
\Triple(&\text{\tt "http://example.edu/db/STUDENT\#NAME"}, \domain, \text{\tt "http://example.edu/db/STUDENT"})  \\
\Triple(&\text{\tt "http://example.edu/db/STUDENT\#SID"}, \rdftype, \owlDTP)  \\
\Triple(&\text{\tt "http://example.edu/db/STUDENT\#SID"}, \domain, \text{\tt "http://example.edu/db/STUDENT"})  \\
\Triple(&\text{\tt "http://example.edu/db/STUDENT\#SID=1"}, \text{\tt "http://example.edu/db/STUDENT\#NAME"}, \text{\tt "John"})  \\
\Triple(&\text{\tt "http://example.edu/db/STUDENT\#SID=1"}, \text{\tt "http://example.edu/db/STUDENT\#NAME"}, \text{\tt "Peter"}) \\
\Triple(&\text{\tt "http://example.edu/db/STUDENT\#SID=1"}, \text{\tt "http://example.edu/db/STUDENT\#SID"}, \text{\tt "1"})
\end{align*}
\bw{Therefore,}} $\DM$ is not semantics preserving. \qed

\subsection{Proof of Proposition \ref{prop-dmpk-sp}}
\begin{sloppypar}
It is straightforward to see that given a relational schema $\R$, set
$\Sigma$ of (only) PKs over $\R$ and instance $I$ of $\R$ such that $I
\models \Sigma$, it holds that $\DM_{\text{\it pk}}(\R, \Sigma, I)$ is
consistent under the OWL semantics. Likewise, if $I \not\models
\Sigma$, then by definition of $\DM_{\text{\it pk}}$, the resulting RDF graph
will have an inconsistent triple \Triple(a, \owldf, a), which would
generate an inconsistency under the OWL semantics.
\end{sloppypar}

\subsection{Proof of Theorem \ref{theo-mono-map-no}}
For the sake of contradiction, assume that $\M$ is a monotone and
semantics preserving direct mapping. Then consider a schema $\R$
containing at least two distinct relation names $R_1$, $R_2$, and
consider a set $\Sigma$ of PKs and FKs over $\R$ containing at least
one foreign key from $R_1$ to $R_2$. Then we have that there exist
instances $I_1$, $I_2$ of $\R$ such that $I_1 \subseteq I_2$, $I_1$
does not satisfy $\Sigma$ and $I_2$ does satisfy $\Sigma$. Given that
$\M$ is semantics preserving, we know that $\M(\R, \Sigma, I_2)$ is
consistent under the OWL semantics, while $\M(\R, \Sigma, I_1)$
is not. Given that $\M$ is monotone, we have that $\M(\R, \Sigma, I_1)
\subseteq \M(\R, \Sigma, I_2)$. But then we conclude that $\M(\R,
\Sigma, I_1)$ is also consistent under the OWL semantics, given
that  $\M(\R, \Sigma, I_2)$ is consistent and $\M(\R, \Sigma, I_1)
\subseteq \M(\R, \Sigma, I_2)$, which leads to a contradiction.

\subsection{Proof of Theorem \ref{theo-dm-sp-pk-fk}}
It is straightforward to see that given a relational schema $\R$, set
$\Sigma$ of PKs and FKs over $\R$ and instance $I$ of $\R$ such that
$I \models \Sigma$, it holds that $\DM_{\text{\it pk} + \text{\it
fk}}(\R, \Sigma, I)$ is consistent under the OWL
semantics. Likewise, if $I \not\models \Sigma$, then by definition of
$\DM_{\text{\it pk}+ \text{\it fk}}$, the resulting RDF graph will
contain an inconsistent triple \Triple(a, \owldf, a), which would
generate an inconsistency under the OWL semantics.

\end{document}